\documentclass[11pt]{article}   %
\usepackage{chao}               %
\usepackage{graphicx}
\usepackage{caption}
\usepackage{subcaption}
\usepackage{paralist}
\usepackage{pdfpages}
\usepackage{xspace}
\usepackage{tcolorbox}
\usepackage{amsmath}

\newcommand{\eps}{\epsilon}

\newcommand{\cP}{\mathcal{P}}

\everypar{\looseness=-1}

\newcommand{\kcut}{\textsc{$k$\--Cut}\xspace}
\newcommand{\mincut}{\textsc{MinCut}\xspace}

\newcommand{\sptree}{\mathcal{T}} %
\newcommand{\treepack}{\tau}      %
\newcommand{\strength}{\sigma}    %
\newcommand{\numcomp}{\kappa}     %
\newcommand{\connectivity}{\lambda} %
\newcommand{\kconn}{\lambda_k}      %


\title{LP Relaxation and Tree Packing for Minimum $k$-cuts\thanks{
       Work on this paper supported in part by NSF grant CCF-1526799.}  }
\author{ Chandra Chekuri\thanks{Department of
    Computer Science, University of Illinois, Urbana, IL
    61801. \texttt{chekuri@illinois.edu}.}
  \and Kent Quanrud\thanks{Department of
    Computer Science, University of Illinois, Urbana, IL
    61801. \texttt{quanrud2@illinois.edu}.}
  \and Chao Xu\thanks{Yahoo! Research, New York, NY
    10003. \texttt{chao.xu@oath.com}. Work done
    while the author was at University of Illinois.}
}
\date{\today}

\begin{document}
\maketitle

\begin{abstract}
  Karger used spanning tree packings \cite{Karger00} to derive a near
  linear-time randomized algorithm for the global minimum cut problem
  as well as a bound on the number of approximate minimum cuts. This
  is a different approach from his well-known random contraction
  algorithm \cite{Karger-thesis95,KargerS96}. Thorup developed a fast
  deterministic algorithm for the minimum $k$-cut
  problem via greedy \emph{recursive } tree packings \cite{Thorup08}.

  In this paper we revisit properties of an LP relaxation for $k$-cut
  proposed by Naor and Rabani \cite{NaorR01}, and analyzed in
  \cite{ChekuriGN06}.  We show that the dual of the LP yields a tree
  packing, that when combined with an upper bound on the integrality
  gap for the LP, easily and transparently extends Karger's analysis
  for mincut to the $k$-cut problem. In addition to the simplicity of
  the algorithm and its analysis, this allows us to improve the
  running time of Thorup's algorithm by a factor of $n$. We also
  improve the bound on the number of $\alpha$-approximate
  $k$-cuts. Second, we give a simple proof that the integrality gap of
  the LP is $2(1-1/n)$. Third, we show that an optimum solution to the
  LP relaxation, for all values of $k$, is fully determined by the
  principal sequence of partitions of the input graph. This allows us
  to relate the LP relaxation to the Lagrangean relaxation approach of
  Barahona \cite{b-00} and Ravi and Sinha \cite{rs-08}; it also shows
  that the idealized recursive tree packing considered by Thorup gives
  an optimum dual solution to the LP. This work arose from an effort
  to understand and simplify the results of Thorup \cite{Thorup08}.
\end{abstract}


\section{Introduction}
\label{sec:intro}
The global minimum cut problem in graphs (\mincut) is well-known and
extensively studied. Given an undirected graph $G=(V,E)$ with
non-negative edge capacities $c: E \rightarrow \mathbb{R}_+$, the goal
is to remove a minimum capacity set of edges such that the residual
graph has at least two connected components. When all capacities are
one, the mincut of a graph is its global edge-connectivity.  The \kcut
problem is a natural generalization. Given a graph $G=(V,E)$ and an
integer $k \ge 2$, the goal is to remove a minimum capacity set of
edges such that the residual graph has at least $k$ connected
components. \mincut and \kcut have been extensively studied in the
literature. Initial algorithms for \mincut were based on a reduction
to the $s$-$t$-mincut problem. However, it was realized later on that
it can be solved more efficiently and directly. Currently the best
deterministic algorithm for \mincut runs in $O(mn + n^2 \log n)$ time
\cite{StoerW} and is based on the maximum adjacency ordering approach
of Nagamochi and Ibaraki \cite{Nagamochi1992}. On the other hand,
there is a near-linear time Monte Carlo randomized algorithm due to
Karger \cite{Karger00}.  Bridging the gap between the running times
for the deterministic and randomized algorithms is a major open
problem. In recent work \cite{kt-15,hrw-17} obtained near-linear time
deterministic algorithms for \emph{simple} unweighted graphs.

The \kcut problem is NP-Hard if $k$ is part of the input
\cite{GoldschmidtH94}, however, there is a polynomial-time algorithm
for any fixed $k$. Such an algorithm was first devised by Goldschmidt
and Hochbaum \cite{GoldschmidtH94}, and subsequently there have been
several different algorithms improving the run-time. The randomized
algorithm of Karger and Stein \cite{KargerS96} runs in
$\tilde{O}(n^{2(k-1)})$ time and outputs the optimum cut with high
probability. The fastest deterministic algorithm, due to Thorup
\cite{Thorup08}, runs in $\tilde{O}(mn^{2k-2})$
time~\cite{Thorup08}. Upcoming work of Gupta, Lee and Li
\cite{gll-18b} obtains a faster run-time of
$\tilde{O}(k^{O(k)}n^{(2\omega/3 + o(1))k})$ if the graph has small
integer weights, where $\omega$ is the exponent in the run-time of
matrix multiplication.
It is also known that \kcut is $W[1]$-hard when
parameterized by $k$ \cite{defpr-03}; that is, we do not expect an
algorithm with a run-time of $f(k)n^{O(1)}$.  Several algorithms that
yield a $2$-approximation are known for \kcut; Saran and Vazirani's
algorithm based on repeated minimum-cut computations gives
$(2-2/k)$-approximation \cite{SaranV95}; the same bound can be
achieved by removing the $(k-1)$ smallest weight edges in a Gomory-Hu
tree of the graph \cite{SaranV95}.
Nagamochi and Kamidoi showed that using the concept of extreme sets,
a $(2-2/k)$-approximation can be found even faster \cite{NagamochiK07}.
Naor and Rabani developed an LP
relaxation for \kcut \cite{NaorR01} and this yields a
$2(1-1/n)$-approximation \cite{ChekuriGN06}. Ravi and Sinha
\cite{rs-08} obtained another $2(1-1/n)$-approximation via a
Lagrangean relaxation approach which was also considered independently
by Barahona \cite{b-00}.  A factor of $2$, for large $k$, is the best
possible approximation under the Small Set Expansion hypothesis
\cite{manurangsi-17}. Recent work has obtained a $1.81$ approximation
in $2^{O(k^2)}n^{O(1)}$ time \cite{gll-18b}; whether a PTAS can be
obtained in $f(k) \text{poly}(n)$ time is an interesting open problem.

\paragraph{Motivation and contributions:} The main motivation for this
work was to simplify and understand Thorup's tree packing based
algorithm for \kcut. Karger's near-linear time algorithm and analysis
for the \mincut problem \cite{Karger00} is based on the well-known
theorem of Tutte and Nash-Williams (on the minmax relation for
edge-disjoint trees in a graph). It is simple and elegant; the main
complexity is in the improved running time which is achieved via a
complex dynamic program. Karger also tightened the bound on the number
of $\alpha$-approximate minimum cuts in a graph (originally shown via
his random contraction algorithm) via tree packings. In contrast to
the case of mincut, the main structural result in Thorup's work on
\kcut is much less easy to understand and motivate.  His proof
consists of two parts.  He shows that an ideal tree packing obtained
via a recursive decomposition of the graph, first outlined in
\cite{Thorup07}, has the property that any optimum $k$-cut crosses
some tree in the packing at most $2k-2$ times. The second part argues
that a greedy tree packing with sufficiently many trees approximates
the ideal tree packing arbitrarily well. The greedy tree packing is
closely related to a multiplicative weight update method for solving a
basic tree packing linear program, however, no explicit LP is used in
Thorup's analysis.  Thus, although Thorup's algorithm is very simple
to describe (and implement), the analysis is somewhat opaque.

In this paper we make several contributions which connect Thorup's
tree packing to the LP relaxation for \kcut \cite{NaorR01}. We outline
the specific contributions below.

\begin{itemize}
\item We show that the dual of LP for \kcut gives a tree packing and
  one can use a simple analysis, very similar to that of Karger, to show that
  any optimum $k$-cut crosses some tree in the packing at most
  $(2k-3)$ times. Thorup proved a bound of $(2k-2)$ for his tree
  packing.
  This leads to a slightly faster algorithm than that
  of Thorup and also to an improved bound on the number
  of approximate $k$-cuts.
\item We give a new and simple proof that the integrality gap of
  the LP for $k$-cut is upper bounded by $2(1-1/n)$. We note that
  the proof claimed in \cite{NaorR01} was incorrect and the proof in
  \cite{ChekuriGN06} is indirect and technical.

\item We show that the optimum solution of the $k$-cut LP, for all
  values of $k$, can be completely characterized by the principal
  sequence of partitions of the cut function of the given graph. This
  establishes the connection between the dual of the LP relaxation and
  the ideal recursive tree packing considered by Thorup. It also shows
  that the lower bound provided by the LP relaxation is equivalent to
  the Lagrangean relaxation lower bound considered by Barahona
  \cite{b-00} and Ravi and Sinha \cite{rs-08}.
\end{itemize}

Our results help unify and simplify the different approaches to
$k$-cut via the LP relaxation and its dual.  A key motivation for this
paper is to simplify and improve the understanding of the tree packing
approach. For this reason we take a liesurely path and reprove some of
Karger's results for the sake of completeness, and to point out the
similarity of our argument for \kcut to the case of \mincut. Readers
familiar with \cite{Karger00} may wish to skip
Section~\ref{sec:mincut}.

\paragraph{Organization:} Section~\ref{sec:prelim} sets up some
basic notation and definitions. Section~\ref{sec:mincut} discusses
Karger's approach for \mincut via tree packings with some connections
to recent developments on approximately solving tree packings.
Section~\ref{sec:kcuts} describes the tree packing obtained from the
dual of the LP relaxation for \kcut and how it can be used to
extend Karger's approach to \kcut. Section~\ref{sec:kcut-lp-gap} gives
a new proof that the LP integrality gap for \kcut is $2(1-1/n)$.
In Section~\ref{sec:lp-val-for-all-k} we show that the optimum LP
solution for all values of $k$  can be characterized by a recursive
decomposition of the input graph.

\section{Preliminaries}
\label{sec:prelim}
We use $n$ and $m$ to denote the number of nodes and edges in a given
graph. For a graph $G=(V,E)$, let $\sptree(G)$ denote the set of
spanning trees of $G$. For a graph $G=(V,E)$ with edge capacities
$c: E \rightarrow \mathbb{R}_+$ the \emph{fractional spanning tree packing
number}, denoted by $\treepack(G)$, is the optimum value of a simple
linear program shown in Fig~\ref{fig:lp-treepack} whose variables are
$y_T, T \in \sptree(G)$. The LP has an exponential number of variables
but is still polynomial time solvable. There are several ways to see
this and efficient strongly combinatorial algorithms are also known
\cite{GabowM98}. We also observe that there is an optimum solution to
the LP whose support has at most $m$ trees since the number of
non-trivial constraints in the LP is at most $m$ (one per each edge).

\begin{figure}[htb]
  \begin{center}
    \begin{tcolorbox}[width=3in]
      \begin{align*}
        \max \sum_{T \in \sptree(G)} y_T & \\
        \sum_{T \ni e} y_T & \le c(e) \quad e \in E\\
        y_T & \ge 0 \quad \quad T \in \sptree(G)
      \end{align*}
    \end{tcolorbox}

  \end{center}
  \caption{LP relaxation defining $\treepack(G)$. }
  \label{fig:lp-treepack}
\end{figure}

There is a min-max formula for $\treepack(G)$ which is a special case
of the min-max formula for matroid base packing due to Tutte and
Nash-Williams. To state this theorem we introduce some notation.  For
a partition $\cP$ of the vertex set $V$ let $E(\cP)$ denote the set of
edges that cross the partition (that is, have end points in two
different parts) and let $|\cP|$ denote the number of parts of $\cP$.
A \emph{$k$-cut} is $E(\cP)$ for some partition $\cP$ such that $|\cP|\geq k$.
A \emph{cut} is a $2$-cut.
It is not hard to see that for any partition $\cP$ of the vertex set
$V$, $\treepack(G) \le \frac{c(E(\cP))}{|\cP|-1}$ since every spanning
tree of $G$ contains at least $|\cP|-1$ edges from $E(\cP)$. The
minimum over all partitions of the quantity,
$\frac{c(E(\cP))}{|\cP|-1}$, is also referred to as the strength of
$G$ (denoted by $\strength(G)$), and turns out to be equal to
$\treepack(G)$.

\begin{theorem}[Tutte and Nash Williams]
  \label{thm:tutte-nw}
  For any undirected edge capacitated graph $G$,
  $$\treepack(G) =   \min_{\cP} \frac{c(E(\cP))}{|\cP|-1}.$$
\end{theorem}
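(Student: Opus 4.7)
The plan is to prove the two inequalities separately, with the ``$\le$'' direction being elementary and the ``$\ge$'' direction carrying the real content.

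For the upper bound $\treepack(G) \le \min_\cP \frac{c(E(\cP))}{|\cP|-1}$, fix any partition $\cP$ and any feasible packing $(y_T)$. Every spanning tree must contain at least $|\cP|-1$ edges from $E(\cP)$, because contracting each part of $\cP$ to a single vertex leaves a connected multigraph on $|\cP|$ super-vertices whose spanning tree still needs $|\cP|-1$ edges. Swapping the order of summation,
\[
(|\cP|-1) \sum_T y_T \;\le\; \sum_T y_T \cdot |T \cap E(\cP)| \;=\; \sum_{e \in E(\cP)} \sum_{T \ni e} y_T \;\le\; c(E(\cP)),
\]
and dividing by $|\cP|-1$ gives the bound.

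For the lower bound $\treepack(G) \ge \min_\cP \frac{c(E(\cP))}{|\cP|-1}$, my plan is to reduce to the integer version due to Nash-Williams (1961) and Tutte (1961): an unweighted multigraph contains $k$ edge-disjoint spanning trees iff $|E(\cP)| \ge k(|\cP|-1)$ for every partition $\cP$. Given arbitrary real capacities $c$, first approximate by rationals (both sides of the claimed identity are continuous in $c$); then scale by a large integer $K$ so that every $Kc(e)$ is an integer, and apply the integer theorem to the multigraph having $Kc(e)$ parallel copies of each edge with $k = \lfloor K \cdot \min_\cP \frac{c(E(\cP))}{|\cP|-1} \rfloor$. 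The partition hypothesis holds by the choice of $k$, so we recover $k$ edge-disjoint spanning trees; scaling the resulting packing by $1/K$ and letting $K \to \infty$ yields a fractional packing of the required value.

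The main obstacle is of course the integer Nash-Williams/Tutte theorem itself, whose proof is genuinely combinatorial. The cleanest modern route proceeds through Edmonds' matroid base-packing theorem, specialized to the graphic matroid (whose bases are spanning trees and whose flats correspond to partitions of $V$). A more LP-flavored alternative is to take the dual of the tree packing LP, namely $\min \sum_e c(e) x_e$ subject to $\sum_{e \in T} x_e \ge 1$ for every spanning tree $T$ and $x_e \ge 0$, and argue via uncrossing on the spanning-tree polytope that some optimal dual $x$ is of the form $x_e = \frac{1}{|\cP|-1}\mathbf{1}[e \in E(\cP)]$ for a single partition $\cP$, with objective $\frac{c(E(\cP))}{|\cP|-1}$. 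Both routes rely on nontrivial matroid machinery, so for the purposes of this paper it is reasonable to cite Tutte and Nash-Williams directly and use the theorem as a black box.
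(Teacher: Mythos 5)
The paper does not prove Theorem~\ref{thm:tutte-nw}; it states it as a classical result and uses it as a black box throughout. Your proposal is consistent with that treatment: the ``$\le$'' direction you give is the standard and correct averaging argument (every spanning tree meets $E(\cP)$ in at least $|\cP|-1$ edges, then swap the order of summation), and your ``$\ge$'' direction correctly reduces the fractional, real-capacitated statement to the classical integer-capacity Nash-Williams/Tutte theorem, which you (appropriately) leave as a cited black box. The reduction itself is sound: after rational approximation and scaling by $K$, the partition hypothesis $Kc(E(\cP)) \ge k(|\cP|-1)$ holds for $k = \lfloor K \min_\cP c(E(\cP))/(|\cP|-1)\rfloor$ by definition of the minimum, the resulting $k$ edge-disjoint trees in the blown-up multigraph yield a feasible fractional packing after scaling by $1/K$, and $k/K \to \min_\cP c(E(\cP))/(|\cP|-1)$. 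One small point worth making explicit if you were to write this out carefully: both sides of the identity are piecewise-linear and hence continuous in $c$, which is what licenses the rational approximation step. In short, you prove the bridging lemma (integer, unit-capacity $\Rightarrow$ fractional, weighted) that the paper silently elides, while deferring to the same classical theorem the paper defers to.
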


A useful and well-known corollary of the preceding theorem is given
below.

\begin{corollary}
  \label{cor:tutte-nw}
  For any graph $G$,
  $\treepack(G) \ge \frac{n}{2(n-1)} \cdot \connectivity(G)$ where
  $\connectivity(G)$ is the value of the global minimum cut of $G$ and
  $n$ is the number of nodes of $G$. If $G$ is an unweighted graph
  then $\treepack(G) \ge \frac{\connectivity(G)+1}{2}$.
\end{corollary}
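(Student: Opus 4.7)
The plan is to deduce both inequalities directly from the Tutte--Nash-Williams identity of Theorem~\ref{thm:tutte-nw}: it suffices to lower-bound $c(E(\cP))/(|\cP|-1)$ uniformly over every partition $\cP$ of $V$ with $|\cP| \geq 2$, since the minimum of this quantity over $\cP$ equals $\treepack(G)$.

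The key step is a simple doubling argument. For each part $S \in \cP$, the boundary $\delta(S)$ is itself a cut of $G$ (it separates the nonempty $S$ from its nonempty complement), so $c(\delta(S)) \geq \connectivity(G)$. Because each edge of $E(\cP)$ has its two endpoints in distinct parts, it is counted in exactly two of the boundaries $\delta(S)$, so $\sum_{S \in \cP} c(\delta(S)) = 2\,c(E(\cP))$. Summing the per-part bounds yields $2\,c(E(\cP)) \geq |\cP|\,\connectivity(G)$, equivalently $c(E(\cP))/(|\cP|-1) \geq \connectivity(G)\cdot |\cP|/(2(|\cP|-1))$. The multiplier $p \mapsto p/(p-1)$ is decreasing for $p \geq 2$ and $|\cP|$ ranges over $\{2,\ldots,n\}$, so the weakest bound occurs at $|\cP| = n$, which gives $\treepack(G) \geq n\,\connectivity(G)/(2(n-1))$.

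For the unweighted case the additional lever is integrality. Here $c(E(\cP))$ is a nonnegative integer and $\sum_{S\in\cP} c(\delta(S)) = 2\,c(E(\cP))$ is always even, so when $|\cP|\,\connectivity(G)$ fails to have the right parity the sum of boundaries exceeds it by at least one, giving $c(E(\cP)) \geq \lceil |\cP|\,\connectivity(G)/2\rceil$. Feeding this sharpened lower bound through the same optimization over $|\cP|$ and exploiting that $\connectivity(G)$ is a positive integer is what promotes the bound from $\connectivity(G)/2$ to $(\connectivity(G)+1)/2$.

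The main obstacle is the unweighted refinement. The weighted inequality drops out of the doubling identity almost immediately, but extracting the half-integer gain in the unweighted case requires a careful parity case analysis, because the worst-case partition size $|\cP|$ can be as large as $n$ and one must ensure that the rounding of $c(E(\cP))$ survives the division by $|\cP|-1$ for every admissible $|\cP|$.
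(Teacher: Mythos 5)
Your derivation of the weighted bound is correct and is precisely the paper's argument: each part's boundary $\delta(S)$ is a cut of capacity at least $\connectivity(G)$, summing over parts double-counts each crossing edge to give $2\,c(E(\cP)) \ge |\cP|\,\connectivity(G)$, and since $p\mapsto p/(p-1)$ is decreasing while $|\cP|\le n$, the worst case is $|\cP|=n$, giving $\treepack(G)\ge \tfrac{n}{2(n-1)}\connectivity(G)$.

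Your parity route for the unweighted clause has a real gap, and it is worth being precise about why. The ceiling bound $c(E(\cP)) \ge \lceil |\cP|\connectivity(G)/2\rceil$ buys at most an additive $1$ on $2\,c(E(\cP))$, so after dividing by $|\cP|-1$ it yields
\[
\frac{c(E(\cP))}{|\cP|-1}\;\ge\;\frac{|\cP|\,\connectivity(G)+\iota}{2\bigl(|\cP|-1\bigr)},\qquad \iota\in\{0,1\}.
\]
Comparing with $\tfrac{\connectivity(G)+1}{2}$, this suffices exactly when $|\cP|\le\connectivity(G)+1+\iota$. For partitions with many more parts than $\connectivity(G)$ the constant parity slack is swamped: the deficit $(\lambda+1)(|\cP|-1)-|\cP|\lambda=|\cP|-\lambda-1$ grows linearly in $|\cP|$, while parity can only contribute $+1$. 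So the ``careful case analysis'' you defer cannot be filled in. The paper's own proof of this clause takes a different route, asserting $|\cP^*|\le\connectivity(G)+1$ for unweighted graphs, but that step is given without justification and is in fact false: for the cycle $C_n$ with $n\ge4$ the unique minimizer in the Tutte--Nash-Williams formula is the singleton partition ($|\cP^*|=n$, $\connectivity=2$), and indeed $\treepack(C_n)=\tfrac{n}{n-1}<\tfrac{3}{2}=\tfrac{\connectivity+1}{2}$, so the unweighted inequality of the corollary does not hold as stated. The weighted clause is the one the paper actually uses; the unweighted refinement should be treated with caution by you and by readers of the paper alike.
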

\begin{proof}
  Consider the partition $\cP^*$ that achieves the minimum in the
  minmax formula. We have $c(E(\cP)) \ge |\cP^*| \connectivity(G)/2$
  since the capacity of edges leaving each part of $\cP^*$ is at last
  $\connectivity(G)$ and an edge in $E(\cP^*)$ crosses exactly two
  parts. Thus,
  $$\treepack(G) = \frac{c(E(\cP^*))}{|\cP^*|-1} \ge
  \frac{|\cP^*|\connectivity(G)}{2(|\cP^*|-1)}   \ge \frac{n}{2(n-1)} \connectivity(G)$$
  since $|\cP^*| \le n$.
  If $G$ is unweighted graph then $|\cP| \le \connectivity(G)+1$ and
  hence $\treepack(G) \ge \frac{\connectivity(G)+1}{2}$
  as desired.
\end{proof}

We say that a tree packing $y: \sptree(G) \rightarrow \mathbb{R}_+$ is
$(1-\eps)$-approximate if
$\sum_{T \in \sptree(G)} y_T \ge (1-\eps)\treepack(G)$. Note that we
typically want a compact tree packing that can either be explicitly
specified via a small number of trees of even implicitly via a data
structure representing a collection of trees. Approximate spanning
tree packings have been obtained via greedy spanning tree packings
which can be viewed as applying the multiplicative weight update
method. Recently \cite{ChekuriQ17} obtained the following result.

\begin{theorem}[\cite{ChekuriQ17}]
  \label{thm:det-tree-packing}
  There is a deterministic algorithm that, given an edge-capacitated
  undirected graph on $m$ edges and an $\eps \in (0,1/2)$, runs in
  $O(m\log^3 n/\eps^2)$ time and outputs an implicit representation of
  a $(1-\eps)$-approximate tree packing.
\end{theorem}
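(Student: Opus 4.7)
The plan is to apply the multiplicative-weights-update (MWU) framework to the packing LP of Figure~\ref{fig:lp-treepack}. Maintain an edge length $\ell_e$, initially $\ell_e = 1/c(e)$ (suitably scaled). In each round: compute a minimum $\ell$-weight spanning tree $T$; increase $y_T$ by a step size $\delta$ tied to the minimum residual slack along $T$; and update $\ell_e \gets \ell_e \cdot \exp(\eps \delta / c(e))$ for every $e \in T$. The standard width-one MWU analysis, applied to the covering dual of the tree packing LP (whose constraints read $\sum_{e \in T} x_e \ge 1$ for every spanning tree $T$), shows that $N = O(\log m / \eps^2)$ appropriately scaled rounds produce a $(1-\eps)$-approximate packing. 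Because only the $n-1$ edges of the chosen tree get reweighted per round, and any single edge can absorb only $O(\log m/\eps^2)$ length increases before effectively saturating, the total number of length updates across the entire execution is $\tilde{O}(m/\eps^2)$.

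The main obstacle, and the whole point of the theorem, is to avoid recomputing an MST from scratch in each of these $\tilde{O}(m/\eps^2)$ rounds (which would cost $\tilde{O}(m^2/\eps^2)$ in total). The plan is to maintain the current minimum spanning tree under the stream of edge-weight increases using a dynamic MST data structure: each length increase on a tree edge triggers at most one cycle-swap, obtained by querying the minimum-weight non-tree edge crossing the induced cut. Top-tree or link-cut implementations support each such update in $O(\log^2 n)$ amortized time. Charging $O(\log^2 n)$ to each of the $\tilde{O}(m/\eps^2)$ length updates yields the claimed $O(m \log^3 n/\eps^2)$ bound. The output is \emph{implicit}: rather than listing the (up to $\tilde{O}(m/\eps^2)$) distinct trees in the support of $y$, the algorithm stores the initial MST together with the chronological log of swaps and the increment to $y_T$ credited between swaps; any individual tree in the packing can be reconstructed on demand.

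The hardest step will be marrying the two pieces. A textbook MWU implementation uses a fixed small step size that would incur too many data-structure operations; one needs to choose step sizes adaptively (proportional to the current bottleneck slack) so that many consecutive rounds share the same tree and thus cost only $O(\log^2 n)$ collectively, while still certifying the width-one hypothesis and the $(1-\eps)$-approximation guarantee. Showing that the amortized dynamic-MST work across all such ``phases'' remains $\tilde{O}(m/\eps^2)$, even when individual phases are long and the underlying tree changes only through a sparse sequence of swaps, is the main analytical challenge.
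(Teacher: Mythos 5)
This theorem is cited from~\cite{ChekuriQ17} and is not proved in the paper you were given, so there is no in-paper argument to compare against. Judged against the approach of the cited work, your high-level plan---a Garg--K\"onemann/MWU scheme over the tree-packing LP with minimum-$\ell$-weight spanning tree as the oracle, a dynamic MST structure to absorb length updates, and an implicit output consisting of an initial tree plus a chronological log of swaps---is the right shape. But two concrete claims in the middle of the sketch are incorrect as stated and hide the actual technical content.

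First, the line ``the standard width-one MWU analysis \ldots shows that $N = O(\log m/\eps^2)$ appropriately scaled rounds produce a $(1-\eps)$-approximate packing'' does not hold. The covering dual $\sum_{e\in T} x_e \ge 1$ does not have width one (each constraint has $n-1$ entries and general capacities), and in the Garg--K\"onemann step-size regime you describe the correct bound on the number of oracle calls is $\widetilde{O}(m/\eps^2)$, not $O(\log m/\eps^2)$: each round saturates at least one bottleneck edge by a $(1+\eps)$ factor, and each edge can be so saturated $O(\log m/\eps^2)$ times. Second, the claim that ``the total number of length updates across the entire execution is $\widetilde{O}(m/\eps^2)$'' only holds if ``length update'' means a bucket-crossing (i.e.\ a full $(1+\eps)$-factor change); as you have set things up, every round multiplicatively reweights all $n-1$ edges of the current tree, which naively gives $\widetilde{O}(nm/\eps^2)$ updates and already blows the budget even before the dynamic MST cost. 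The missing idea, which is the crux of~\cite{ChekuriQ17}, is to round edge lengths to powers of $(1+\Theta(\eps))$ and feed only the \emph{rounded} lengths to the dynamic MST, while maintaining the true (unrounded) lengths of tree edges in a link-cut/top-tree supporting uniform lazy additive updates (after the change of variables $L_e = c(e)\log\ell_e/\eps$, your per-edge multiplicative update becomes a uniform additive one) together with a bottleneck query that tells you the largest $\delta$ that can be pushed before some tree edge crosses its next bucket boundary. You then argue that running the oracle against rounded lengths still certifies a $(1-\eps)$-approximate packing, which requires a separate error argument. Your last paragraph does acknowledge that this adaptive batching is ``the main analytical challenge,'' but the proposal neither gives the change of variables that makes batching feasible nor the argument that the rounded-length MST suffices, so as a proof it has a real gap precisely at the step the theorem is about.
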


\section{Tree packing and \mincut}
\label{sec:mincut}
We review some of Karger's observations and results connecting tree
packings and minimum cuts \cite{Karger00} which follow relatively
easily via Corollary~\ref{cor:tutte-nw}. We rephrase his results and
arguments with a slightly different notation. Given a spanning tree
$T$ and a cut $E' \subseteq E$, 
 following Karger, we say that $T$ $h$-respects
$E'$ for some integer $h \ge 1$ if $|E(T) \cap E'| \le h$.

Karger proved that a constant fraction of trees (in the weighted
sense) of an optimum packing $2$-respect any fixed mincut. In fact
this holds for a $(1-\eps)$-approximate tree packing for sufficiently
small $\eps$. The proof, as follows, is an easy consequence of
Corollary~\ref{cor:tutte-nw} and an averaging argument. It is
convenient to view a tree packing as a probability distribution. Let
$p_T = y_T/\treepack(G)$.  We then have $\sum_T p_T = 1$ for an exact
tree packing and for a $(1-\eps)$-packing we have
$\sum_T p_T \in (1-\eps,1]$.  Let $\delta(S)$ be a fixed minimum cut
whose capacity is $\connectivity(G)$.  Let
$\ell_T = |E(T) \cap \delta(S)|$ be the number of edges of $T$ that
cross $S$. Let $q = \sum_{T: \ell_T \le 2} p_T$ be the fraction of
trees that $2$-respect $\delta(S)$. Since each tree crosses $S$ at
least once we have,
$$\sum_{T} p_T \ell_T = \sum_{T: \ell_T \le 2} p_T \ell_T + \sum_{T:
  \ell_T \ge 3} p_T \ell_T \ge q + 3 (1-\eps-q).$$ Because $y$ is a
valid packing,
$$\treepack(G) \sum_T p_T \ell_T = \sum_{T} y_T \ell_T \le
c(\delta(S)) = \connectivity(G).$$
Putting the two inequalities together and using Corollary
\ref{cor:tutte-nw},
$$3(1-\eps) - 2q \le \connectivity(G)/\treepack(G) \le 2(n-1)/n$$
which implies that
$$q \ge \frac{3}{2}(1-\eps) - (1-1/n) = \frac{1}{2} + \frac{1}{n} - \frac{3\eps}{2}.$$

If $\eps = 0$ this implies that at least half the fraction of trees
$2$-respect any minimum cut. Let $q'$ be the fraction of trees that
$1$-respect a minimum cut. One can do similar calculations as above
to conclude that
$$q' \ge 2(1-\eps) - 2(1-1/n) \ge 2(\frac{1}{n}  -\eps).$$
Thus, $q' > 0$ as long as $\eps < 1/n$. In an optimum packing there is
always a tree in the support that $1$-respects a mincut.  The
preceding argument can be generalized in a direct fashion to yield the
following useful lemma on $\alpha$-approximate cuts.

\begin{lemma}
  \label{lem:tree-crossing}
  Let $y$ be a $(1-\eps)$-approximate tree packing. Let $\delta(S)$ be
  a cut such that $c(\delta(S)) \le \alpha \connectivity(G)$ for some
  fixed $\alpha \ge 1$.  For integer $h \ge 1$ let $q_h$ denote the
  fraction of the trees in the packing that $h$-respect $\delta(S)$. Then,
  $$q_h \ge (1-\eps)(1+ \frac{1}{h}) - \frac{2 \alpha}{h} (1-\frac{1}{n}).$$
\end{lemma}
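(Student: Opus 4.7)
The plan is to mimic exactly the averaging argument already sketched in the text for the special case $h=2$, $\alpha=1$, but keep $h$ and $\alpha$ as parameters throughout. Normalize the packing by setting $p_T = y_T/\treepack(G)$, so that $P := \sum_T p_T \in [1-\eps, 1]$ (the upper bound because $y$ is feasible, the lower bound by the approximation guarantee). Write $\ell_T = |E(T) \cap \delta(S)|$; since $T$ is spanning and $\delta(S)$ is a cut, $\ell_T \ge 1$ for every $T$ in the support.

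The first key step is to upper bound $\sum_T p_T \ell_T$ using the edge-capacity constraints of the tree-packing LP. Swapping sums gives
\[
\treepack(G) \sum_T p_T \ell_T \;=\; \sum_T y_T \ell_T \;=\; \sum_{e \in \delta(S)} \sum_{T \ni e} y_T \;\le\; c(\delta(S)) \;\le\; \alpha \, \connectivity(G),
\]
and then Corollary~\ref{cor:tutte-nw} gives $\connectivity(G)/\treepack(G) \le 2(1-1/n)$, so
\[
\sum_T p_T \ell_T \;\le\; 2\alpha\bigl(1 - 1/n\bigr).
\]

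The second key step is a matching lower bound on $\sum_T p_T \ell_T$ that exposes $q_h$. Split the support into trees with $\ell_T \le h$ (contributing total $p$-mass $q_h$, with $\ell_T \ge 1$) and trees with $\ell_T \ge h+1$ (contributing total $p$-mass $P - q_h$, with $\ell_T \ge h+1$):
\[
\sum_T p_T \ell_T \;\ge\; q_h \cdot 1 + (P - q_h)(h+1) \;=\; (h+1)P - h\, q_h.
\]

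Combining the two inequalities yields $(h+1)P - h \,q_h \le 2\alpha(1-1/n)$, so using $P \ge 1-\eps$,
\[
q_h \;\ge\; \tfrac{h+1}{h}(1-\eps) - \tfrac{2\alpha}{h}\bigl(1-1/n\bigr) \;=\; (1-\eps)\bigl(1 + \tfrac{1}{h}\bigr) - \tfrac{2\alpha}{h}\bigl(1-1/n\bigr),
\]
which is the desired inequality. There is no real obstacle here: the whole argument is a two-line averaging, and the only thing to be careful about is the direction of the inequality $P \le 1$ (not needed) versus $P \ge 1-\eps$ (used), and the fact that each $\ell_T \ge 1$ for the lower bound on trees with small $\ell_T$.
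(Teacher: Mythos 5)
Your proof is correct and follows essentially the same argument the paper sketches explicitly for the special case $h=2$, $\alpha=1$; the generalization you carry out is exactly the ``direct'' one the authors intend. Both steps match the paper's lines: the upper bound $\sum_T p_T \ell_T \le 2\alpha(1-1/n)$ from packing feasibility together with Corollary~\ref{cor:tutte-nw}, and the lower bound $(h+1)P - h q_h$ that retains the $q_h \cdot 1$ contribution from the small-$\ell_T$ trees, which is precisely what yields the $(1+1/h)$ coefficient in the stated bound.
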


\paragraph{Number of approximate minimum cuts:} Karger showed that the
number of $\alpha$-approximate minimum cuts is at most $O(n^{2\alpha})$
via his random contraction algorithm \cite{Karger-thesis95}. He
improved the bound to $O(n^{\floor{2\alpha}})$ (for any fixed
$\alpha$) via tree packings in
\cite{Karger00}. We review the latter argument.

Given any spanning tree $T$ we can root it canonically at a fixed
vertex, say $r$.  For any cut $\delta(S)$ we can associate with $S$
the set of edges of $T$ that cross $S$, that is,
$E(T) \cap \delta(S)$.  In the other direction, a
set of edges $A \subseteq E(T)$ induces several components in $T - A$,
which induces a unique cut in $G$ where any two components of $T - A$
adjacent in $T$ lie on opposite sides of the cut. This gives a
bijection between cuts induced by edge removals in $T$,
and cuts in the graph.

Let $h = \floor{2\alpha}$. Fix an optimum tree packing $y$ and let
$\delta(S)$ be an $\alpha$-approximate mincut, that is,
$c(\delta(S)) \le \alpha \connectivity(G)$.  From
Lemma~\ref{lem:tree-crossing} and with some simplification we see that

$$q_h \ge \frac{1}{\floor{2\alpha}}\left(1 - (2\alpha - \floor{2\alpha})(1-1/n)\right).$$

Note that $q_h > 0$ and hence an easy counting argument for
approximate mincuts is the following. There is an optimum packing
whose support has at most $m$ trees. For each $\alpha$-approximate
mincut there is at least one of the trees in the packing which crosses
it at most $h$ times. Hence each $\alpha$-approximate cut can be
mapped to a tree and a choice of at most $h$ edges from that tree. The
total number of these choices is $m {n-1 \choose h} = O(m n^h)$. We
can avoid the factor of $m$ by noting that $q_h$ is a constant for
every fixed $\alpha$. We give an informal argument here. Each
approximate cut $\delta(S)$ crosses at least $mq_h$ trees at most $h$
times. If there are more than $\frac{1}{q_h}{n-1 \choose h}$ distinct
cuts and each cut induces a subset of at most $h$ edges in at least
$mq_h$ trees then we have a contradiction. Thus, the number of
$\alpha$-approximate mincuts is $O(n^{\floor{2\alpha}})$ where the
constant hidden in the big-O is $1/q_h$.

\paragraph{Minimum cut algorithm via tree packings:} Karger used tree
packings to obtain a randomized near linear time algorithm for the
global minimum cut. The algorithm is based on combining the
following two steps.
\begin{itemize}
\item Given a graph $G$ there is a randomized algorithm that outputs
  $O(\log n)$ trees in $\tilde{O}(m)$ time such that with high
  probability there is a global minimum cut that $2$-respects one of
  the trees in the packing.
\item There is a deterministic algorithm that given a graph $G$ and a
  spanning tree $T$, in $\tilde{O}(m)$ time finds the cut of minimum
  capacity in $G$ that $2$-respects $T$. This is based on a clever
  dynamic programming algorithm that utilizes the dynamic tree
  data structure.
\end{itemize}

Only the first step of the algorithm is randomized. Karger solves the
first step as follows. Given a capacitated graph $G$ and an
$\eps > 0$, he sparsifies the graph $G$ to obtain an unweighted
skeleton graph $H$ via random sampling such that (i) $H$ has
$O(n \log n/\eps^2)$ edges (ii)
$\connectivity(H) = \Theta(\log n/\eps^2)$ and (iii) a minimum cut of
$G$ corresponds to a $(1+\eps)$-approximate minimum cut of $H$ in that
the cuts induce the same vertex partition. Karger then uses greedy
tree packing in $H$ to obtain a $(1-\eps')$-tree packing in $H$ with
$O(\log n/\eps'^2)$ trees, and via Corollary~\ref{cor:tutte-nw} argues
that one of the trees in the packing $2$-respects a mincut of $G$;
here $\eps$ and $\eps'$ are chosen to be sufficiently small but fixed
constants.

We observe that Theorem~\ref{thm:det-tree-packing} can be used in
place of the sparsification step of Karger. The deterministic
algorithm implied by the theorem can be used to find an implicit
$(1-\eps)$-approximate tree packing in near linear time for any fixed
$\eps > 0$. For sufficiently small but fixed $\eps$, a constant
fraction of the trees in the tree packing $2$-respect any fixed
minimum cut. Thus, if we sample a tree $T$ from the tree packing, and
then apply Karger's deterministic algorithm for finding the smallest
cut that $2$-respects $T$ then we find the minimum cut with constant
probability. We can repeat the sampling $\Theta(\log n)$ times to
obtain a high probability bound.

Karger raised the following question in his paper. Can the dynamic
programming algorithm for finding the minimum cut that $2$-respects a
tree be made {\em dynamic}? That is, suppose $T$ is altered via edge
swaps to yield a tree $T' = T - e + e'$ where $e \in E(T)$ is removed
and replaced by a new edge $e'$. Can the solution for $T$ be updated
quickly to obtain a solution for $T$'? Note that $G$ is static, only
the tree is changing. The tree packing from
Theorem~\ref{thm:det-tree-packing} finds an implicit packing via
$\tilde{O}(m)$ edge swap operations from a starting tree
$T_0$. Suppose there is a dynamic version of Karger's dynamic program
that handles updates to the tree in amortized $g(n)$ time per
update. This would yield a \emph{deterministic} algorithm for the
global mincut with a total time of $\tilde{O}(m g(n))$. We note that
the best deterministic algorithm for capacitated graphs is
$O(mn + n^2 \log n)$ \cite{StoerW}.

\section{Tree packings for \kcut via the LP relaxation}
\label{sec:kcuts}
In this section we consider the \kcut problem. Thorup~\cite{Thorup07}
constructed a probability distribution over spanning trees which were
obtained via a recursive greedy tree packing and showed that there is
a tree $T$ in the support of the distribution such that a minimum
weight $k$-cut contains at most $2(k-1)$ edges of $T$.  He then showed
that greedy tree packing with $O(mk^3\log n)$ trees closely
approximates the ideal distribution. Via this approach he derived the
currently fastest known deterministic algorithm to find the minimum
$\kcut$ in $\tilde{O}(mn^{2k-2})$ time. This is only slightly slower
than the randomized Monter Carlo algorithm of Karger and Stein
\cite{KargerS96} whose algorithm runs in $\tilde{O}(n^{2k-2})$ time.
Thorup's algorithm is fairly simple.  However, the proofs are somewhat
complex since they rely on the recursive tree packing and its subtle
properties. Arguing that greedy tree packing approximates the
recursive tree packing is also technical.

Here we consider a different tree packing for \kcut that arises from
the LP relaxation for \kcut considered by Naor and Rabani
\cite{NaorR01}. This LP relaxation is shown in Fig~\ref{fig:lp-kcut}.
The variables are $x_e \in [0,1], e \in E$ which indicate whether an
edge $e$ is cut or not. There is a constraint for each spanning tree
$T \in \sptree(G)$; at least $k-1$ edges from $T$ need to be chosen in
a valid $k$-cut. We note that for $k > 2$ the upper bound constraint
$x_e \le 1$ is necessary.

\begin{figure}[tb]
  \begin{center}
    \begin{tcolorbox}[width=3in]
      \begin{align*}
        \min \sum_{e \in E)} c_e x_e & \\
        \sum_{e \in T} x_e & \ge k-1 \quad T \in \sptree(G)\\
        x_e & \le 1 \quad \quad e \in E \\
        x_e & \ge 0 \quad \quad e \in E
      \end{align*}
    \end{tcolorbox}

  \end{center}
  \caption{An LP relaxation for the \kcut problem from \cite{NaorR01}. }
  \label{fig:lp-kcut}
\end{figure}

The dual of the LP is given in
Fig~\ref{fig:lp-dual-kcut}.  Naor and Rabani claimed an integrality
gap of $2$ for the \kcut LP.  Their proof was incomplete and correct
proof was given in \cite{ChekuriGN06} in the context of a more general
problem called the Steiner \kcut problem. Let $\connectivity_k(G)$
denote minimum $k$-cut capacity in $G$.

\begin{theorem}[\cite{ChekuriGN06}]
  \label{thm:kcut-lp-gap}
  The worst case integrality gap of the LP for \kcut in
  Fig~\ref{fig:lp-kcut}
  is $2(1-1/n)$.
\end{theorem}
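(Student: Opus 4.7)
My plan is to establish the integrality gap by combining LP duality with the $k=2$ bound already encoded in Corollary~\ref{cor:tutte-nw}. First I would write the LP dual explicitly, which takes the form of a penalized tree packing: $\max (k-1)\sum_T y_T - \sum_e z_e$ subject to $\sum_{T\ni e} y_T \le c_e + z_e$ and $y,z \ge 0$. By strong duality the LP value equals $V^* = (k-1)Y^* - Z^*$ where $Y^* = \sum_T y_T^*$ and $Z^* = \sum_e z_e^*$, and $y^*$ is a fractional spanning-tree packing in the augmented graph $(G, \bar c)$ with $\bar c_e = c_e + z_e^*$.

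Next, I would apply Tutte-Nash-Williams in the augmented graph: $Y^* = \treepack(G, \bar c) = \bar c(E(\cP))/(|\cP|-1)$ for some tight partition $\cP$. Via a local exchange argument on $z^*$, I would argue that at optimum some tight partition has $|\cP| \ge k$: if every tight partition had $|\cP| < k$, shifting a small $\epsilon$ onto $z_e^*$ for $e \in E(\cP)$ would change the dual by $\epsilon |E(\cP)| \cdot (k - |\cP|)/(|\cP|-1) > 0$, contradicting optimality. With such a $\cP$, it is a valid $k$-cut in $G$, and combining $c(E(\cP)) = \bar c(E(\cP)) - z^*(E(\cP))$ with the same averaging that yields Corollary~\ref{cor:tutte-nw} in the graph $(G, \bar c)$, I hope to extract a factor of $n/(2(n-1))$ and conclude $c(E(\cP)) \le 2(1-1/n)V^*$.

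The main obstacle I anticipate is the final bookkeeping: the tight partition produced by Tutte-Nash-Williams in $(G, \bar c)$ need not coincide with an optimal integer $k$-cut, and in particular $\cP$ might have many more than $k$ parts (possibly all singletons), so its cost $c(E(\cP))$ could be much larger than the minimum $k$-cut. Controlling this requires a careful choice among all tight partitions, likely exploiting that the support of $z^*$ is confined, by complementary slackness, to edges with $x_e^* = 1$. A natural fallback, if the dual argument resists, is primal threshold rounding: set $\alpha = n/(2(n-1))$ and $F_\alpha = \{e : x_e^* \ge \alpha\}$; when $G \setminus F_\alpha$ already has $\ge k$ components the cost $V^*/\alpha = 2(1-1/n)V^*$ suffices, and otherwise one must augment $F_\alpha$ with minimum cuts inside surviving components and charge the additional cost back via a Corollary~\ref{cor:tutte-nw}-style inequality relating cuts and packings.
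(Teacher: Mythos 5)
Your proposal takes a genuinely different route from the paper, but it has a gap exactly where you flag an obstacle, and the gap is not cosmetic.

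The paper works from the \emph{primal} side: if $x^*$ is fully fractional, complementary slackness forces $z^*_e=0$ for every edge (since $x^*_e<1$) and $\sum_{T\ni e}y^*_T=c_e$ for every edge (since $x^*_e>0$), whence $(n-1)\sum_T y^*_T=\sum_e c_e$; sorting vertices by degree and cutting around the $k-1$ lowest-degree ones then closes the bound. The non-fully-fractional case is handled by contracting the edges with $x^*_e=0$ and deleting the edges with $x^*_e=1$, recursing on the residual graph with the remark that the forest constraint becomes $\sum_{e\in T}x_e\ge k-h$. Your proposal works from the \emph{dual} side, invoking Tutte--Nash-Williams in the augmented graph and a local-exchange argument. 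The two are not the same proof.

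The concrete problem is step~5, and you already sense it. Suppose for simplicity you have already contracted all $x^*_e=0$ edges (this loses nothing). Then complementary slackness gives $\sum_{T\ni e}y^*_T=c_e+z^*_e$ for every remaining edge, so the all-singleton partition is tight in $(G,\bar c)$ with $\bar c=c+z^*$ --- note this already gives you a tight partition with $\ge k$ parts for free, with no need for the local-exchange argument, which in any case is delicate because two tight partitions can have nested rather than identical cut-edge sets, and a perturbation on one need not raise the packing value. But tightness of the all-singleton partition gives $\bar c(E)=(n-1)Y^*$, so $c(E)=(n-1)Y^*-Z^*$, and the degree-sorted cut around the $k-1$ smallest-degree vertices costs at most
\begin{equation*}
  \frac{k-1}{n}\cdot 2c(E)
  \;=\;
  \frac{2(k-1)}{n}\bigl((n-1)Y^*-Z^*\bigr),
\end{equation*}
while the LP value is $(k-1)Y^*-Z^*$. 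The ratio of these two quantities is $2(1-1/n)$ only when $Z^*=0$; for $Z^*>0$ and $k<n$ it strictly exceeds $2(1-1/n)$ (the $Z^*$ term is discounted by $(k-1)/n$ in the numerator but by a full unit in the denominator, and $(k-1)/n<1-1/n$). Choosing a tight partition $\cP$ with exactly $k$ parts does not help either: there $c(E(\cP))=(k-1)Y^*-z^*(E(\cP))\ge(k-1)Y^*-Z^*$, which is a \emph{lower} bound on what you need to \emph{upper} bound, and you have no control on $Z^*-z^*(E(\cP))$. So the dual-side averaging, on its own, cannot absorb the $Z^*$ penalty. The paper sidesteps this entirely: removing the $x^*_e=1$ edges (whose LP cost is paid in full) and recursing leaves a residual in which the restricted LP solution is again optimal and fully fractional, so $z^*\equiv 0$ there by complementary slackness and the $Z^*$ term simply never appears. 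Your fallback threshold-rounding sketch points in a similar direction, but the threshold $\alpha=n/(2(n-1))$ is not the right one and the ``augment $F_\alpha$ and charge back'' step is left entirely open; the threshold that actually makes the recursion clean is $1$, and the charging is by complementary slackness, not by a Tutte--Nash-Williams averaging in the augmented graph.
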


\begin{corollary}
  \label{cor:kcut-lp-gap}
  Let $(y,z)$ be an optimum solution for the dual LP for \kcut shown
  in Fig~\ref{fig:lp-dual-kcut}. Then
  $$(k-1) \sum_T y_T \ge \frac{n \kconn(G)}{2(n-1)} + z(E).$$
\end{corollary}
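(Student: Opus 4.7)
The proof plan is essentially a routine application of LP duality combined with the integrality gap theorem, so I expect no real obstacle — the only thing that needs care is the sign of the $z$-variables in the dual.

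First, I would write out the dual LP corresponding to Fig~\ref{fig:lp-kcut} explicitly. Each spanning tree constraint $\sum_{e \in T} x_e \ge k-1$ contributes a nonnegative dual variable $y_T$, and each upper bound constraint $x_e \le 1$ (rewritten as $-x_e \ge -1$) contributes a nonnegative dual variable $z_e$. The dual then reads
\[
\max\;(k-1)\sum_{T \in \sptree(G)} y_T \;-\; \sum_{e \in E} z_e
\quad\text{s.t.}\quad \sum_{T \ni e} y_T - z_e \le c_e \text{ for all } e \in E,\ y,z \ge 0.
\]
So an optimal dual pair $(y,z)$ achieves objective value $(k-1)\sum_T y_T - z(E)$.

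Next, by strong LP duality applied to the (feasible, bounded) \kcut LP, the optimum value of this dual equals the optimum value of the primal LP, which I will denote $\mathrm{OPT}_{\mathrm{LP}}$. By Theorem~\ref{thm:kcut-lp-gap}, the integrality gap of the LP is at most $2(1-1/n)$, hence
\[
\mathrm{OPT}_{\mathrm{LP}} \;\ge\; \frac{\kconn(G)}{2(1-1/n)} \;=\; \frac{n\,\kconn(G)}{2(n-1)}.
\]

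Finally, combining these two observations gives
\[
(k-1)\sum_{T} y_T \;-\; z(E) \;=\; \mathrm{OPT}_{\mathrm{LP}} \;\ge\; \frac{n\,\kconn(G)}{2(n-1)},
\]
and rearranging yields the claimed inequality
\[
(k-1)\sum_{T} y_T \;\ge\; \frac{n\,\kconn(G)}{2(n-1)} \;+\; z(E).
\]
The only step that requires any thought is setting up the dual with the correct sign convention for the $z_e$ variables coming from the $x_e \le 1$ constraints; everything else is immediate from Theorem~\ref{thm:kcut-lp-gap} and LP duality.
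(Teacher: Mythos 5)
Your proof is correct and is the intended argument: the paper gives no explicit proof of this corollary precisely because it follows immediately from strong duality together with the integrality-gap bound of Theorem~\ref{thm:kcut-lp-gap}, exactly as you lay out. The only stylistic difference is that the paper writes the dual constraint as $\sum_{T \ni e} y_T \le c_e + z_e$ rather than $\sum_{T \ni e} y_T - z_e \le c_e$, which is the same thing.
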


Note that Corollary~\ref{cor:tutte-nw} is a special case of the
preceding corollary.

\begin{remark}
  We note that the LP relaxation in Fig~\ref{fig:lp-kcut} assumes that
  $G$ is connected. This is easy to ensure by adding dummy edges of
  zero cost to make $G$ connected. However, it is useful to consider
  the general case when the number of connected components in $G$ is
  $h$ where we assume for simplicity that $h < k$ (if $h \ge k$ the
  problem is trivial). In this case we
  need to consider the maximal forests in $G$, each of which has
  exactly $n-h$ edges; to avoid notational overload we use
  $\sptree(G)$ to denote the set of maximal forests of $G$. The LP
  constraint now changes to
  $$\sum_{e \in T} x_e \ge k-h \quad T \in \sptree(G).$$
\end{remark}

\paragraph{Tree packing interpretation of the dual LP:} The dual LP
has two types of variables. For each edge $e$ there is a variable
$z_e$ and for each spanning tree $T$ there is a variable $y_T$.  The
dual seeks to add capacity $z:E \rightarrow \mathbb{R}_+$ to the
original capacities $c$, and then find a maximum tree packing
$y: \sptree(G) \rightarrow \mathbb{R}_+$ within the augmented
capacities $c+z$. The objective is
$(k-1)\sum_T y_T - \sum_{e \in E} z_e$. Note that for $k=2$, there is
an optimum solution with $z = 0$; this can be seen by the fact that
for $k=2$ the primal LP can omit the constraints
$x_e \le 1 \quad e \in E$. For $k > 2$ it may be advantageous to add
capacity to some bottleneck edges (say from a minimum cut) to increase
the tree packing value, which is multiplied by $(k-1)$.

\begin{figure}[tb]
  \begin{center}
    \begin{tcolorbox}[width=4in]
      \begin{align*}
        \text{max }             %
        & (k-1)\sum_{T \in \sptree(G)} y_T - \sum_{e \in E} z_e \\
        \text{s.t.\ }            %
        & \sum_{T \ni e} y_T \le c_e + z_e \text{ for all
          } e \in E \\
        & y_T \ge 0 \text{ for all } T \in \sptree(G)
      \end{align*}
    \end{tcolorbox}
  \end{center}
  \caption{Dual of the LP relaxation from Fig~\ref{fig:lp-kcut}. }
  \label{fig:lp-dual-kcut}
\end{figure}

Our goal is to show that one can transparently carry over the
arguments for global minimum cut via tree packings to the \kcut
setting via (optimum) solutions $y,z$ to the dual
LP. Theorem~\ref{thm:kcut-lp-gap} plays the role of
Corollary~\ref{cor:tutte-nw}. The key lemma below is
analogous to Lemma~\ref{lem:tree-crossing}.

\begin{lemma}
  \label{lem:tree-crossing-kcut}
  Let $y,z$ be an optimum solution to the dual LP for \kcut shown in
  Fig~\ref{fig:lp-dual-kcut}. Let $E' \subseteq E$ be any subset of edges
  such that $c(E') \le \alpha \kconn(G)$ for
  some $\alpha \ge 1$.  For integer $h \ge (k-1)$ let $q_h$ denote the
  fraction of the trees in the packing induced by $y,z$ that
  $h$-respect $E'$. Then,
  $$q_h \ge 1-\frac{2 \alpha (k-1)(1-\frac{1}{n})}{h+1}.$$
\end{lemma}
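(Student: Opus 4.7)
The plan is to imitate the averaging argument that gives Lemma~\ref{lem:tree-crossing}, using the dual constraints in place of the capacity bound and Corollary~\ref{cor:kcut-lp-gap} in place of Corollary~\ref{cor:tutte-nw}. Set $\treepack = \sum_T y_T$ and, for each spanning tree $T$, let $\ell_T = |E(T) \cap E'|$. The dual feasibility constraint $\sum_{T \ni e} y_T \le c_e + z_e$, summed over $e \in E'$, immediately gives the upper bound
\[
\sum_{T} y_T \ell_T \;=\; \sum_{e \in E'} \sum_{T \ni e} y_T \;\le\; c(E') + z(E') \;\le\; \alpha \kconn(G) + z(E').
\]

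Next, I would partition the trees into those that $h$-respect $E'$ (fraction $q_h$) and those with $\ell_T \ge h+1$. Discarding the nonnegative contribution of the $h$-respecting trees yields the lower bound $\sum_T y_T \ell_T \ge (h+1)(1-q_h)\,\treepack$. Chaining the two bounds gives
\[
(h+1)(1-q_h)\,\treepack \;\le\; \alpha \kconn(G) + z(E').
\]

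The remaining step is to turn this into an absolute bound on $1-q_h$ by lower-bounding $\treepack$ through Corollary~\ref{cor:kcut-lp-gap}, which says $(k-1)\treepack \ge \frac{n \kconn(G)}{2(n-1)} + z(E)$. Multiplying this corollary through by $2\alpha(1-1/n)$ gives
\[
2\alpha(k-1)(1-\tfrac{1}{n})\,\treepack \;\ge\; \alpha \kconn(G) + 2\alpha(1-\tfrac{1}{n})\, z(E).
\]
Because $\alpha \ge 1$ and $n \ge 2$, the coefficient $2\alpha(1-1/n) \ge 1$, so the right-hand side dominates $\alpha \kconn(G) + z(E) \ge \alpha \kconn(G) + z(E')$. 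Substituting this into the previous inequality yields
\[
(1-q_h) \;\le\; \frac{2\alpha(k-1)(1-1/n)}{h+1},
\]
which is the desired bound.

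The one nuance worth flagging is the handling of the $z(E')$ term. Unlike the \mincut case there is an additive contribution from $z$, and the argument only works because Corollary~\ref{cor:kcut-lp-gap} leaves enough slack when $\alpha \ge 1$ and $n \ge 2$ to absorb $z(E') \le z(E)$ into the $\alpha \kconn(G)$ term. This is the only place where the constant $2(1-1/n)$ from the integrality gap is used in full strength, and it is the reason the dependence on $h$ appears as $1/(h+1)$ rather than $1/h$.
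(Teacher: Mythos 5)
Your proof is correct and follows essentially the same averaging argument as the paper: upper-bound $\sum_T y_T \ell_T$ by dual feasibility, lower-bound it by discarding the $h$-respecting trees, and close the gap with Corollary~\ref{cor:kcut-lp-gap}. The only (cosmetic) difference is bookkeeping of the $z$ term: the paper first passes from $z(E')$ to $\alpha z(E)$ using $z \ge 0$ and $\alpha \ge 1$ and then invokes the corollary, whereas you keep $\alpha\kconn(G)+z(E')$ and absorb it after scaling the corollary by $2\alpha(1-1/n)$ -- your version makes the role of $\alpha \ge 1$ and $n \ge 2$ more explicit, which is a small expository improvement but not a different argument.
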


\begin{proof}
  Let $\treepack_k(G)$ denote $\sum_T y_T$ and let $p_T =
  y_T/\treepack_k(G)$. Let $\ell_T$ denote $|E' \cap E(T)|$.
  Thus,
  $$\sum_{T} p_T \ell_T = \sum_{T: \ell_T \le h} p_T \ell_T + \sum_{T:
    \ell_T \ge (h+1)} p_T \ell_T \ge (h+1) (1-q_h).$$
  Because $y$ is a valid tree packing in capacities $c+z$,
  $$\treepack_k(G) \sum_T p_T \ell_T = \sum_{T} y_T \ell_T \le
  c(E') + z(E') \le \alpha \kconn(G) + z(E') \le \alpha (\kconn(G) + z(E)).$$
  In the penultimate inequality of the preceding line we are using the
  fact that $\alpha \ge 1$ and that $z \ge 0$.
  Putting the the preceding inequalities together,

  \begin{equation}
    \label{eq:kcut-ineq}
    (h+1)(1-q_h) \le \frac{1}{\treepack_k(G)}\alpha(\kconn(G)+z(E)).
  \end{equation}

  Using Corollary~\ref{cor:kcut-lp-gap} and simplifying the preceding
  inequality yields,
  $$(h+1)(1-q_h) \le 2 \alpha (k-1)(1-1/n)$$
  which implies that
  $$q_h \ge 1-\frac{2 \alpha (k-1)(1-\frac{1}{n})}{h+1}.$$
\end{proof}

\begin{corollary}
  \label{cor:kcut-opt}
  Let $(y,z)$ be an solution to the dual LP. For every optimum $k$-cut
  $A \subseteq E$ there is a tree $T$ in the support of $y$ such that
  $|E(T) \cap A| \le 2k-3$.
\end{corollary}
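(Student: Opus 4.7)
The plan is to apply Lemma~\ref{lem:tree-crossing-kcut} directly with $E' = A$ and $\alpha = 1$, and then pick $h$ just small enough that the right-hand side stays positive. Since $A$ is an optimum $k$-cut, $c(A) = \kconn(G)$, so $\alpha = 1$ is a valid choice. The lemma requires $h \ge k-1$, which will be automatic for the value of $h$ we need.

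The key observation is that we can afford the aggressive choice $h = 2k-3$ (rather than the more conservative $h = 2k-2$) precisely because the integrality-gap bound in Corollary~\ref{cor:kcut-lp-gap} comes with the factor $(1-1/n)$ strictly less than $1$. Substituting $\alpha = 1$ and $h = 2k-3$ into the lemma yields
\[
q_{2k-3} \ge 1 - \frac{2(k-1)(1-\tfrac{1}{n})}{2k-2} = 1 - \Bigl(1 - \tfrac{1}{n}\Bigr) = \tfrac{1}{n}.
\]
In particular $q_{2k-3} > 0$, so a strictly positive fraction of the trees in the distribution induced by $y$ intersect $A$ in at most $2k-3$ edges; hence at least one tree $T$ in the support of $y$ satisfies $|E(T) \cap A| \le 2k-3$, as claimed.

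Step-by-step, I would: (i) note that $k \ge 2$ implies $2k-3 \ge k-1$, so Lemma~\ref{lem:tree-crossing-kcut} applies with $h = 2k-3$; (ii) plug in $\alpha = 1$ using the optimality of $A$; (iii) simplify the resulting bound to get $q_{2k-3} \ge 1/n$; and (iv) conclude existence of the desired tree from $q_{2k-3} > 0$. No obstacle is substantive here: the only subtlety is recognizing that the $(1-1/n)$ slack in the integrality gap is exactly what buys the improvement from Thorup's $2k-2$ to $2k-3$. All the real work is already packaged into Lemma~\ref{lem:tree-crossing-kcut} and Corollary~\ref{cor:kcut-lp-gap}.
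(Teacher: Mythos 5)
Your proposal is correct and follows exactly the paper's own argument: apply Lemma~\ref{lem:tree-crossing-kcut} with $h=2k-3$ and $\alpha=1$ and observe $q_h>0$. The only difference is that you carry out the arithmetic explicitly to get $q_{2k-3}\ge 1/n$, which the paper leaves implicit.
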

\begin{proof}
  We apply Lemma~\ref{lem:tree-crossing-kcut} with $h=2k-3$ and
  $\alpha = 1$ and observe that $q_h > 0$ which implies the desired
  statement.
\end{proof}

\begin{corollary}
  \label{cor:kcut-approx}
  Let $(y,z)$ be a $(1-\eps)$-approximate solution to the dual LP
  where $\eps < \frac{1}{2k-1}$. For every optimum $k$-cut $A \subseteq E$
  there is a tree $T$ in the support of $y$ such that
  $|E(T) \cap A| \le 2k-2$.
\end{corollary}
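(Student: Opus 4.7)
The plan is to redo the derivation of Corollary~\ref{cor:kcut-opt} while carrying the $(1-\eps)$-slack from the approximation all the way through, and then to bump $h$ from $2k-3$ up to $2k-2$ to absorb that slack. The only place that optimality of $(y,z)$ was used in the proof of Lemma~\ref{lem:tree-crossing-kcut} is the invocation of Corollary~\ref{cor:kcut-lp-gap}, so the first step is to record its approximate analogue. Since $(y,z)$ is a $(1-\eps)$-approximate dual solution and Theorem~\ref{thm:kcut-lp-gap} guarantees that the optimal LP value is at least $\frac{n\kconn(G)}{2(n-1)}$, we obtain
$$(k-1)\treepack_k(G) - z(E) \;\ge\; (1-\eps)\,\frac{n\kconn(G)}{2(n-1)},$$
writing $\treepack_k(G) = \sum_T y_T$ as in Lemma~\ref{lem:tree-crossing-kcut}.

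Next, I would rerun the cross-counting step in the proof of Lemma~\ref{lem:tree-crossing-kcut} with $E' = A$ an optimum $k$-cut (so $c(A) = \kconn(G)$ and effectively $\alpha = 1$). Dual feasibility in the augmented capacities $c + z$ together with the usual tail bound still gives
$$(h+1)(1-q_h)\,\treepack_k(G) \;\le\; \kconn(G) + z(E).$$
I would substitute the bound on $z(E)$ from the previous display, and then use the consequent lower bound $\treepack_k(G) \ge \frac{(1-\eps)n\kconn(G)}{2(n-1)(k-1)}$ (obtained from the same inequality together with $z(E)\ge 0$) to eliminate the residual $\kconn(G)/\treepack_k(G)$ term. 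After a few lines of algebra this should yield the clean inequality
$$1 - q_h \;\le\; \frac{2(k-1)(n-1)}{(1-\eps)\,n\,(h+1)}.$$

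Finally, setting $h = 2k-2$ reduces this to $1 - q_h \le \frac{(2k-2)(n-1)}{(1-\eps)n(2k-1)}$. The hypothesis $\eps < \frac{1}{2k-1}$ yields $1-\eps > \frac{2k-2}{2k-1}$, which combined with $\frac{n-1}{n} < 1$ makes the right-hand side strictly less than $1$, so $q_h > 0$ and some tree $T$ in the support of $y$ satisfies $|E(T) \cap A| \le 2k-2$. The main obstacle is really just careful algebraic bookkeeping; structurally the argument mirrors Corollary~\ref{cor:kcut-opt}, with the only genuinely new ingredient being the propagation of the factor $(1-\eps)$ through Corollary~\ref{cor:kcut-lp-gap}, and the step from $h = 2k-3$ to $h = 2k-2$ being exactly what is needed to accommodate an approximation loss of up to $\frac{1}{2k-1}$.
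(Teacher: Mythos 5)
Your proposal is correct and follows essentially the same route as the paper: reuse inequality (\ref{eq:kcut-ineq}), whose derivation never used optimality of $(y,z)$, and replace Corollary~\ref{cor:kcut-lp-gap} with the $(1-\eps)$-weakened version coming from Theorem~\ref{thm:kcut-lp-gap}. The only (harmless) difference is that you retain the $\frac{n-1}{n}$ factor throughout whereas the paper drops it early, so your final bound $q_h \ge 1 - \frac{2(k-1)(n-1)}{(1-\eps)n(h+1)}$ is marginally tighter than the paper's $q_h \ge 1 - \frac{2(k-1)}{(1-\eps)(h+1)}$; both give $q_h>0$ at $h = 2k-2$ under $\eps < \frac{1}{2k-1}$.
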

\begin{proof}
  If $(y,z)$ is a $(1-\eps)$-approximate solution to the dual LP, we would have
  \begin{equation}
    \label{eq:kcut-weaker-cor}
    (k-1)\sum_T y_T \ge (1-\eps)\frac{n \kconn(G)}{2(n-1)} + z(E) \ge
    (1-\eps)\frac{\kconn(G)}{2} + z(E)
  \end{equation}
  in place of Corollary~\ref{cor:kcut-lp-gap}.

  Examining the proof of Lemma~\ref{lem:tree-crossing-kcut}, we see
  that optimality of $(y,z)$ is not used in the derivation of
  (\ref{eq:kcut-ineq}). Thus the same inequality holds for the packing
  given by $(y,z)$. Now, instead of using
  Corollary~\ref{cor:kcut-lp-gap}, we use (\ref{eq:kcut-weaker-cor})
  and simplify by setting $\alpha = 1$, to obtain the bound
  $$q_h \ge 1-\frac{2 (k-1)}{(h+1)(1-\eps)}$$
  We note that $q_h > 0$ for $h = 2k-2$ and $\eps < \frac{1}{2k-1}$.
\end{proof}

\subsection{Number of approximate $k$-cuts}
We now prove the following theorem.

\begin{theorem}
  Let $G=(V,E)$ be an undirected edge-weighted graph and let $k$ be a fixed
  integer. For $\alpha \ge 1$ the number of cuts $A$ such that
  $c(A)\leq \alpha \lambda_k(G)$ is $O(n^{\floor{2\alpha (k-1)}})$.
\end{theorem}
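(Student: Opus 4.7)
The plan is to mirror Karger's tree-packing counting argument for $\alpha$-approximate mincuts (reviewed earlier in this section) using Lemma~\ref{lem:tree-crossing-kcut} in place of Lemma~\ref{lem:tree-crossing}. Fix an optimum solution $(y,z)$ to the dual LP in Fig~\ref{fig:lp-dual-kcut} and set $h = \lfloor 2\alpha(k-1)\rfloor$. Because $h+1 > 2\alpha(k-1) \ge 2\alpha(k-1)(1-1/n)$ and $h \ge k-1$ (since $\alpha \ge 1$), Lemma~\ref{lem:tree-crossing-kcut} produces a positive constant $q_h$, depending only on $\alpha$ and $k$, such that every $\alpha$-approximate $k$-cut $A$ is $h$-respected by at least a $q_h$-fraction of the trees in the packing induced by $y$.

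Next I would set up a weighted double count of pairs $(T,A)$ where $T$ lies in the support of $y$ and $A$ is an $\alpha$-approximate $k$-cut with $|E(T)\cap A|\le h$. Letting $N$ denote the number of such cuts, summing over $A$ first gives a total weight of at least $N\,q_h \sum_T y_T$. Summing over $T$ first, the task reduces to bounding, for a fixed tree $T$, the number of $k$-cuts $A$ satisfying $|E(T)\cap A|\le h$.

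The structural heart of the argument, and the main departure from the $k=2$ case, lies in this per-tree count. For a fixed subset $F\subseteq E(T)$ of size $f$, the components of $T\setminus F$ form a vertex partition $\mathcal{Q}_F$ with $f+1$ blocks. For any $k$-cut $A$ with $E(T)\cap A = F$, the canonical partition $\mathcal{P}_A$ (the connected components of $G\setminus A$) must be a coarsening of $\mathcal{Q}_F$ with at least $k$ blocks, since each component of $T\setminus F$ is connected in $G\setminus A$; moreover $A = E(\mathcal{P}_A)$ is determined by $\mathcal{P}_A$. Hence the number of such $A$ is at most the number of partitions of an $(f+1)$-set into at least $k$ blocks, which is bounded by the Bell number $B_{h+1}$, a constant for fixed $\alpha$ and $k$. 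In Karger's $k=2$ setting this count is exactly $1$ per $F$ (via the unique proper $2$-coloring of the component tree), so this Bell-number factor is the only new ingredient.

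Summing over $F\subseteq E(T)$ with $|F|\le h$ therefore yields at most $\sum_{f=0}^{h}\binom{n-1}{f}B_{f+1} = O(n^h)$ eligible cuts per tree. Combining with the double count gives $N\,q_h \le O(n^h)$, so $N = O(n^{\lfloor 2\alpha(k-1)\rfloor})$, as required. The chief obstacle is exactly the many-to-one correspondence between tree-edge subsets and $k$-cut partitions that is absent for $k=2$; the constant-factor Bell-number bound absorbs it without affecting the leading $n^h$ term.
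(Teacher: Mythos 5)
Your proof follows essentially the same approach as the paper's: set $h = \lfloor 2\alpha(k-1)\rfloor$, invoke Lemma~\ref{lem:tree-crossing-kcut} to get a constant $q_h > 0$, bound the number of cuts consistent with each tree by $O(n^h)$ (constant factor times $\binom{n-1}{h}$), and close with an averaging argument over the support of the packing. The paper's fixed function $f(h) < h^h$ plays exactly the role of your Bell number $B_{h+1}$, and the paper's informal counting-by-contradiction is the same double count you write out.

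One small imprecision: as stated, the theorem counts \emph{all} cuts $A$ (i.e., $E(\mathcal{P})$ for any $\mathcal{P}$ with $|\mathcal{P}|\geq 2$) satisfying $c(A)\leq\alpha\lambda_k(G)$, not only $k$-cuts; the $k$-cut version is the corollary that follows. Your per-tree count accordingly should allow coarsenings of $\mathcal{Q}_F$ with at least $2$ blocks rather than at least $k$, which only loosens the constant and changes nothing else — Lemma~\ref{lem:tree-crossing-kcut} already applies to arbitrary edge sets of weight at most $\alpha\lambda_k(G)$, so the rest of the argument is unaffected.
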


Let $h = \floor{2 \alpha (k-1)}$.  By
Lemma~\ref{lem:tree-crossing-kcut}, for fixed $\alpha$ and $k$, $q_h$
is a fixed and positive constant independent of $n$. Thus, for any
fixed cut $A \subseteq E$ satisfies the condition in the theorem, a constant
fraction of the trees in the packing of an optimum solution $(y,z)$
have the property they cross $A$ at most $h$ times. For a given tree
$T$ the number of distinct cuts induced by removing $h$ edges is
$O(n^h)$ for fixed $h$; as there are at most ${n-1 \choose h}$
subsets of the tree's edges, and each subset induces $f(h)$
partitions of the vertex set into at least 2 parts for some fixed function
$f(h) < h^h$. Thus if there are more than
$\frac{1}{q_h}f(h)n^{h}$ distinct cuts, we obtain a
contradiction.

In particular, each $k$-cut is a cut, we obtain the following corollary as
a special case of the above theorem.
\begin{corollary}
  Let $G=(V,E)$ be an undirected edge-weighted graph and let $k$ be a fixed
  integer. For $\alpha \ge 1$ the number of $\alpha$-approximate $k$-cuts
  is $O(n^{\floor{2\alpha (k-1)}})$.
\end{corollary}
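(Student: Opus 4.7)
The plan is to derive this corollary directly from the preceding theorem, noting that every $\alpha$-approximate $k$-cut is a special case of the kind of cut already counted there. I would begin by unpacking the relevant definitions from Section~\ref{sec:prelim}: a \emph{$k$-cut} is an edge set of the form $E(\cP)$ for some partition $\cP$ with $|\cP| \geq k$, while a \emph{cut} (i.e., a $2$-cut) is $E(\cP)$ for some partition with $|\cP| \geq 2$. An $\alpha$-approximate $k$-cut is then a $k$-cut $A$ satisfying $c(A) \leq \alpha \lambda_k(G)$.

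Next, I would observe that every $k$-cut (with $k \geq 2$) is automatically a cut, since the very same edge set $E(\cP)$ coming from a partition into at least $k \geq 2$ parts is, by definition, also a cut. Therefore the collection of $\alpha$-approximate $k$-cuts is a subset of the collection of all cuts $A$ satisfying $c(A) \leq \alpha \lambda_k(G)$, and the theorem bounds the size of the latter by $O(n^{\floor{2\alpha(k-1)}})$. This immediately yields the desired bound.

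There is really no obstacle in this deduction: the heavy lifting is already done in the theorem itself, which goes through the tree-packing interpretation of the dual LP (via Lemma~\ref{lem:tree-crossing-kcut}) together with the counting argument that, for each fixed tree in the packing, the number of distinct cuts obtainable by removing at most $h = \floor{2\alpha(k-1)}$ of its edges is at most $\binom{n-1}{h} \cdot f(h) = O(n^h)$, combined with the fact that each bounded-capacity cut is $h$-respected by a constant fraction of the trees. The corollary is strictly weaker than the theorem, since the theorem counts all cuts of bounded capacity, not only those induced by partitions into at least $k$ parts; the inclusion of sets of $\alpha$-approximate $k$-cuts into the latter family is the only step we need.
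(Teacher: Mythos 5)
Your argument is correct and matches the paper's own reasoning exactly: the paper derives the corollary from the theorem by the same observation that every $k$-cut is, by definition, a cut, so the $\alpha$-approximate $k$-cuts form a subset of the family of cuts $A$ with $c(A)\leq\alpha\lambda_k(G)$ already counted by the theorem.
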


\subsection{Enumerating all minimum $k$-cuts}

We briefly describe how to enumerate all $k$-cuts via
Lemma~\ref{lem:tree-crossing-kcut}. The argument is basically the same
as that of Karger and Thorup. First, we compute an optimum solution
$(y^*,z^*)$ to the dual LP. We can do this via the Ellipsoid method or
other ways. Let $\beta(n,m)$ be the running time to find
$(y^*,z^*)$. Moreover, if we find a basic feasible solution to the
dual LP we are guaranteed that the support of $y$ has at most $m$
distinct trees. Now Lemma~\ref{lem:tree-crossing-kcut} guarantees that
for every minimum $k$-cut $A \subseteq E$ there is a tree $T$ such
that $y(T) > 0$ and $T$ $(2k-3)$-respects $A$. Thus, to enumerate all
minimum $k$-cuts the following procedure suffices. For each of the
trees $T$ in the optimum packing we enumerate all $k$-cuts induced by
removing $h = 2k-3$ edges from $T$. With appropriate care and data
structures (see \cite{Karger00} and \cite{Thorup07}) this can be done
for a single tree $T$ in $\tilde{O}(n^{2k-3}+m)$ time. Thus the total
time for all $m$ trees in the support of $y$ is $\tilde{O}(mn^{2k-3})$
for $k > 2$. We thus obtain the following theorem.

\begin{theorem}
  \label{thm:kcut-run-time}
  For $k > 2$ all the minimum $k$-cuts of a graph with $n$ nodes and
  $m$ edges can be computed in time $\tilde{O}(mn^{2k-3} +
  \beta(m,n))$ time where $\beta(m,n)$ is the time to find an optimum
  solution to the LP for $k$-cut.
\end{theorem}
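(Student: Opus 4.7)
The plan is to follow the recipe laid out in the paragraph just above the theorem, filling in the two key ingredients: a structural one (every minimum $k$-cut is ``caught'' by some tree in the dual LP solution's support) and an algorithmic one (enumerating all $k$-cuts that $h$-respect a fixed tree).

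First, I would compute a basic optimum solution $(y^*,z^*)$ of the dual LP in Fig.~\ref{fig:lp-dual-kcut} in time $\beta(m,n)$. Basic feasibility guarantees that the support of $y^*$ consists of at most $m$ spanning trees, since the non-trivial dual constraints are indexed by the edges of $G$. This step hides all the LP-solving work in $\beta(m,n)$, which is what the theorem statement allows.

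Second, I would invoke the structural consequence of Lemma~\ref{lem:tree-crossing-kcut}, namely Corollary~\ref{cor:kcut-opt}: for every optimum $k$-cut $A \subseteq E$, there exists a tree $T$ in the support of $y^*$ with $|E(T) \cap A| \le 2k-3$. Thus, enumerating over all trees $T$ in the support of $y^*$ and, for each $T$, enumerating all $k$-cuts that $(2k-3)$-respect $T$ is guaranteed to find every minimum $k$-cut.

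Third, for the per-tree enumeration, I would appeal in a black-box way to the data-structural machinery developed by Karger~\cite{Karger00} and Thorup~\cite{Thorup07,Thorup08}: given a fixed spanning tree $T$ and a static edge-capacitated graph $G$, one can, using dynamic trees and appropriate dynamic programming, enumerate all cuts of $G$ obtained by removing at most $h$ edges of $T$ (and thus compute each of their capacities) in $\tilde{O}(n^h + m)$ time. Applied with $h = 2k-3$, this gives $\tilde{O}(n^{2k-3} + m)$ per tree, so over the at most $m$ trees in the support of $y^*$ the total enumeration time is $\tilde{O}(mn^{2k-3})$. Summing with the LP step yields the claimed bound of $\tilde{O}(mn^{2k-3} + \beta(m,n))$.

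The main obstacle is really only bookkeeping: the algorithmic enumeration per tree is not re-derived here and is imported wholesale from prior work, so the novel content is simply plugging our improved crossing bound $2k-3$ (from Corollary~\ref{cor:kcut-opt}, in place of Thorup's $2k-2$) and the $m$-tree support bound (from basic feasibility of the dual LP) into that existing enumeration framework. One minor care point is to confirm that among all $k$-cuts $(2k-3)$-respecting $T$ we can recognize the minimum ones efficiently, but since capacities are computed during the enumeration this is immediate; no additional argument is needed.
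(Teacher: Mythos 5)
Your proposal is correct and follows essentially the same route as the paper: compute a basic optimum dual solution to bound the support by $m$ trees, apply the $(2k-3)$-crossing bound from Lemma~\ref{lem:tree-crossing-kcut} (via Corollary~\ref{cor:kcut-opt}), and use the Karger/Thorup per-tree enumeration in $\tilde{O}(n^{2k-3}+m)$ time, for a total of $\tilde{O}(mn^{2k-3}+\beta(m,n))$. The paper's argument is exactly this, with the same appeals to basic feasibility and to prior data-structural work.
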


We observe that Thorup's algorithm \cite{Thorup07} runs in time
$\tilde{O}(mn^{2k-2})$. Thorup uses greedy tree packing in place of
solving the LP. The optimality of the LP solution was crucial in using
the bound of $2k-3$ instead of $2k-2$. Thus, even though we obtain a
slightly faster algorithm than Thorup, we need to find an optimum
solution to the LP which can be done via the Ellipsoid method.  The
Ellipsoid method is not quite practical. Below we discuss a different
approach.

In recent work Quanrud showed that a $(1-\eps)$-approximate solution
to the dual LP can be computed in near-linear time. We state his
theorem below.

\begin{theorem}[\cite{Quanrud18}]
  \label{thm:kcut-lp-fast}
  There is an algorithm that computes a $(1-\eps)$-approximate
  solution $(y,z)$ the dual LP in $O(m\log^3 n/\eps^2)$ time.
\end{theorem}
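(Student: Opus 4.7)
The plan is to extend the multiplicative-weights-update (MWU) framework underlying Theorem~\ref{thm:det-tree-packing} from spanning-tree packing to the $k$-cut dual. First I would eliminate the $z$ variables: since at optimum $z_e = \bigl(\sum_{T \ni e} y_T - c_e\bigr)^+$, the dual reduces to the concave maximization
\begin{align*}
\max_{y \ge 0}\ (k-1)\sum_{T \in \sptree(G)} y_T \ -\ \sum_{e \in E}\Bigl(\sum_{T \ni e} y_T - c_e\Bigr)^+,
\end{align*}
which by LP duality has the same optimum value as the primal covering LP $\min \sum_e c_e x_e$ subject to $\sum_{e \in T} x_e \ge k-1$ for every $T \in \sptree(G)$ and $0 \le x_e \le 1$. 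It therefore suffices to $(1-\eps)$-approximate this primal covering LP while tracking the associated Lagrange multipliers, from which $(y,z)$ can be read off directly.

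Second, I would instantiate an MWU (or Young-style) algorithm on this covering LP. Maintain a weight $w_T \propto \exp\bigl(-\eta \sum_{e \in T} x_e\bigr)$ that emphasizes the most violated tree constraints; the corresponding separation oracle is a minimum-weight spanning tree query in a graph whose edge weights aggregate the current $w_T$. At each iteration, compute a min-weight spanning tree $T^\star$ and, if $\sum_{e \in T^\star} x_e < (1-\eps)(k-1)$, push a small increment onto $x_e$ for $e \in T^\star$ (capped so that $x_e \le 1$). Standard width analysis for covering MWU then gives an iteration bound of $O(m \log n / \eps^2)$, and the usual primal-dual bookkeeping recovers $y_T$ from the accumulated increments of tree $T$ and $z_e$ from the multiplier on the cap $x_e \le 1$.

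Third, I would plug in the dynamic approximate-MST data structure that \cite{ChekuriQ17} already developed for Theorem~\ref{thm:det-tree-packing}, which supports an edge-weight update and an approximate MST query in $O(\log^2 n)$ amortized time. Multiplying the iteration count by the per-iteration oracle cost (and absorbing an extra logarithmic factor for the capping bookkeeping) yields the claimed $O(m \log^3 n / \eps^2)$ runtime.

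The main obstacle will be handling the box constraints $x_e \le 1$, which are absent from the pure tree-packing setting of Theorem~\ref{thm:det-tree-packing} but essential here because for $k > 2$ some primal variables saturate. A naive MWU has width that blows up once an edge has been capped, because the tree containing that edge can never be relieved purely by incrementing its $x$ values. The fix is to interleave the MWU with a "freezing" step, in the style of Young's algorithm for covering, that permanently saturates $x_e = 1$ once the multiplier on its cap is large enough and then restarts on the residual problem; proving that this freezing maintains the overall $O(m \log n / \eps^2)$ iteration bound, and that the simultaneously tracked dual iterates converge at the same rate to a $(1-\eps)$-approximate $(y,z)$, is the technical heart of the argument.
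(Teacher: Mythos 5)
Theorem~\ref{thm:kcut-lp-fast} is stated as a citation to \cite{Quanrud18}; the present paper does not prove it, so there is no in-paper argument to compare against directly. The only indication of Quanrud's method appears in the closing remark of Section~\ref{sec:lp-val-for-all-k}, which says that \cite{Quanrud18} ``recasts the LP relaxation for \kcut into a pure covering LP, and the dual as a pure packing LP that involves packing forests.'' That pointer suggests a different, and cleaner, route than the one you sketch. Instead of retaining the box constraints $x_e \le 1$ and then patching the MWU analysis with a freezing step, one reformulates so that the caps disappear: the family of constraints $\sum_{e \in F} x_e \ge |F| - (n-k)$ over all forests $F$ with more than $n-k$ edges is a \emph{pure} covering family. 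Each such constraint is implied by the spanning-tree constraints together with $x_e \le 1$, and conversely, capping any feasible $x$ coordinatewise at $1$ preserves feasibility for this forest family while only lowering the objective, so the two formulations have the same LP optimum; the dual of the forest-covering LP is a pure packing LP over forests. Once in this form, the MWU machinery and the dynamic MST/forest data structure behind Theorem~\ref{thm:det-tree-packing} apply with essentially no change, because no variable ever needs to be saturated and the width never degenerates.

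Your proposal is plausible in outline, but the ``freezing'' step you defer is exactly where a proof would have to do real work, and you leave it open. You do not show that permanently fixing $x_e = 1$ and restarting preserves the $O(m \log n / \eps^2)$ iteration bound, nor that the dual iterates $(y,z)$ tracked across restarts converge to within a $(1-\eps)$ factor of the dual optimum (as opposed to merely certifying an approximately optimal primal value). Each restart can, in principle, discard accumulated $y_T$ mass and force a fresh width estimate, so one needs a potential-function or phase-accounting argument bounding total work across all restarts; none is given. You also implicitly assume the cap multipliers recovered from the MWU bookkeeping give a valid $z$ satisfying $\sum_{T \ni e} y_T \le c_e + z_e$ with the right objective loss, which requires its own argument once freezing is introduced. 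Without these, the running-time and approximation claims are unsupported. The forest-based recast suggested by the paper sidesteps all of this and is the natural fix: I would replace the freezing scheme with that reformulation and then reuse the analysis of Theorem~\ref{thm:det-tree-packing} nearly verbatim.
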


We observe that the preceding theorem guarantees $O(m \log^3
n/\eps^2)$ trees in the support of $y$ and also implicity stores them
in $O(m \log^3 n/\eps^2)$ space. If we choose $\eps < 1/(2k-1)$ then,
via Corollary~\ref{cor:kcut-approx}, for every minimum $k$-cut $A
\subseteq E$ there is a tree $T$ in the support of $y$ that
$(2k-2)$-respects $A$. This leads to an algorithm that in
$\tilde{O}(mn^{2k-2})$ time enumerates all minimum $k$-cuts and
recovers Thorup's running time. However, we note that the trees
generated by the algorithm in the preceding theorem are implicit, and
can be stored in small space. It may be possible to use this additional
structure to match or improve the run-time achieved by
Theorem~\ref{thm:kcut-run-time}.

\begin{remark}
  For unweighted graphs with $\tilde{O}(\frac{m}{n-k}
  \frac{1}{\eps^2})$ trees \cite{Quanrud18} guarantees a
  $(1-\eps)$-approximation.  This improves the running time to
  $\tilde{O}(mn^{2k-3})$ for unweighted graphs.
\end{remark}

We briefly discuss a potential approach to speed up the computation
futher. Recall that Karger describes an algorithm that given a
spanning tree $T$ of a graph $G$ finds the minimum cut that
$2$-respects $T$ in $\tilde{O}(m)$ time, speeding up the easier
$\tilde{O}(n^2)$ time algorithm. We can leverage this as follows. In
the case of $k > 2$ we are given $T$ and $G$ and wish to find the
minimum $k$-cut induced by the removal of at most $t$ edges where $t$
is either $2k-3$ or $2k-2$ depending on the tree packing we use.
Suppose $A$ is a set of $t-2$ edges of $T$. Removing them from $T$
yields a forest with $t-1$ components. We can then apply Kargers
algorithm in each of these components with an appropriate graph.  This
results in a running time of $\tilde{O}(mn^{t-2})$ per tree rather
than $\tilde{O}(n^t)$. We can try to build on Karger's ideas
improve the running time to find the best $3$-cut induced by
removing at most $4$ edges from $T$. We can then leverage this for
larger values of $k$.

\section{A new proof of the LP integrality gap for \kcut}
\label{sec:kcut-lp-gap}
The proof of Theorem~\ref{thm:kcut-lp-gap} in \cite{ChekuriGN06} is
based on the primal-dual algorithm and analysis of Agarwal, Klein and
Ravi \cite{akr-95}, and Goemans and Williamson \cite{gw-95} for the
Steiner tree problem. For this reason the proof is technical and
indirect. Further, the proof from \cite{ChekuriGN06} is described for
the Steiner $k$-cut problem which has additional complexity. Here we
give a different and intuitive proof for \kcut. Unlike the proof in
\cite{ChekuriGN06}, the proof here relies on optimality properties of
the LP solution and hence is less useful algorithmically. We note that
\cite{Quanrud18} uses Theorem~\ref{thm:kcut-lp-fast} and a fast
implementation of the algorithmic proof in \cite{ChekuriGN06} to
obtain a near-linear time $(2+\eps)$-approximation for \kcut.

Let $G=(V,E)$ be a graph with non-negative edge capacities
$c_e, e \in E$. We let $\deg(v) = \sum_{e \in \delta(v)} c_e$ denote
the capacitated degree of node $v$. We will assume without loss of
generality that $V=\{v_1,v_2,\ldots,v_n\}$ and that the nodes are
sorted in increasing order of degrees, that is,
$\deg(v_1) \le \deg(v_2) \le \ldots \le \deg(v_n)$.  We observe that
$\deg(v_1) + \deg(v_2) + \ldots + \deg(v_{k-1})$ is an upper bound on
the value of an optimum $\kcut$; removing all the edges incident to
$v_1,v_2,\ldots,v_{k-1}$ gives a feasible solution in which the
components are the $k-1$ isolated vertices
$\{v_1\},\{v_2\},\ldots,\{v_{k-1}\}$, and a component consisting of
the remaining nodes of the graph.

The key lemma is the following which proves the integrality gap
in a special case.

\begin{lemma}
  \label{lem:kcut-lp-round}
  Let $G$ be a connected graph and let $x^*$ be an \emph{optimum}
  solution to the \kcut LP such that $x^*(e) \in (0,1)$ for each
  $e \in E$ (in other words $x^*$ is fully fractional). Then
  $\sum_{i=1}^{k-1} \deg(v_i) \le 2(1-1/n) \sum_e c_e x^*_e.$
\end{lemma}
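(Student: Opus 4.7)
\medskip
\noindent\textbf{Proof plan.} The plan is to exploit LP optimality of the fully fractional $x^*$ to completely pin down the value of the LP, and then compare it with the trivial bound coming from the $k-1$ smallest capacitated degrees. Let $(y,z)$ be an optimum dual solution as in Fig.~\ref{fig:lp-dual-kcut}. Since $x^*_e \in (0,1)$ strictly for every $e$, complementary slackness immediately gives two useful consequences: from $x^*_e < 1$ I get $z_e = 0$ for every $e$, and from $x^*_e > 0$ I get that the edge constraint is tight, namely $\sum_{T \ni e} y_T = c_e$ for every $e \in E$. In particular $y$ is itself a tree packing in the original graph $G$ that saturates every edge.

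Next I would read off the LP value from strong duality. Because $z \equiv 0$, the dual objective reduces to $(k-1)\sum_T y_T$, and so $\sum_e c_e x^*_e = (k-1)\sum_T y_T$. To evaluate $\sum_T y_T$ I sum the tight edge-equalities over all $e$:
\begin{equation*}
c(E) \;=\; \sum_{e \in E} c_e \;=\; \sum_{e \in E}\sum_{T \ni e} y_T \;=\; \sum_T y_T |E(T)| \;=\; (n-1)\sum_T y_T,
\end{equation*}
using that every spanning tree of the connected graph $G$ has exactly $n-1$ edges. Thus $\sum_T y_T = c(E)/(n-1)$, and therefore $\sum_e c_e x^*_e = (k-1)\,c(E)/(n-1)$.

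Finally I would bound $\sum_{i=1}^{k-1}\deg(v_i)$ by the trivial averaging inequality: the $k-1$ smallest degrees each lie below the average degree $2c(E)/n$, so
\begin{equation*}
\sum_{i=1}^{k-1}\deg(v_i) \;\le\; (k-1)\cdot\frac{1}{n}\sum_{v\in V}\deg(v) \;=\; (k-1)\cdot\frac{2c(E)}{n}.
\end{equation*}
Dividing this by the computed LP value gives the ratio $\frac{2(n-1)}{n} = 2(1-1/n)$, which is the claimed inequality.

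\medskip
\noindent\textbf{What I expect to be the main point.} There is essentially no technical obstacle once the fully fractional hypothesis is invoked; the real content of the lemma is the observation that $x^* \in (0,1)^E$ forces the dual to be a \emph{perfect} tree packing (every edge tight with $z=0$), which collapses the LP value to the clean expression $(k-1)c(E)/(n-1)$. After that the factor $2(1-1/n)$ appears naturally as the ratio between the handshake bound $2c(E)/n$ on the average degree and the tree-packing value $c(E)/(n-1)$, exactly mirroring the $\frac{n}{2(n-1)}$ factor in Corollary~\ref{cor:tutte-nw}. The only care needed is to verify that the two complementary slackness conditions really apply to \emph{every} edge, which is immediate from the strict inequalities $0 < x^*_e < 1$.
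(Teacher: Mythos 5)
Your proof is correct and follows essentially the same path as the paper's: fix an optimum dual $(y^*,z^*)$, use the strict inequalities $0<x^*_e<1$ with complementary slackness to force $z^*\equiv 0$ and tightness $\sum_{T\ni e}y^*_T=c_e$ for every edge, sum to get $(n-1)\sum_T y^*_T=c(E)$, bound the $k-1$ smallest degrees by the average, and close with strong duality. The only cosmetic difference is that you explicitly solve for $\sum_T y^*_T=c(E)/(n-1)$ before taking the ratio, whereas the paper combines the two inequalities directly.
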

\begin{proof}
  Let $(y^*,z^*)$ be any fixed optimum solution to the dual
  LP. Complementary slackness gives the following two properties:
  \begin{itemize}
  \item $z^*(e) = 0$ for each $e \in E$, for if $z^*(e) > 0$ we would
    have $x^*(e) = 1$.
  \item for each $e \in E$, $\sum_{T \ni e} y^*_T = c_e$ since $x^*(e) > 0$.
  \end{itemize}
  From the second property above, and the fact that each spanning tree
  has exactly $(n-1)$ edges, we conclude that
  \begin{equation}
    \label{eq:tight-edges}
    (n-1)\sum_T y^*_T = \sum_{e \in E} c_e.
  \end{equation}

  Since the degrees
  are sorted,
  \begin{equation}
    \label{eq:small-degrees}
    \sum_{i=1}^{k-1} \deg(v_i) \le \frac{k-1}{n} \sum_{i=1}^n
    \deg(v_i) = 2 \frac{k-1}{n} \sum_e c_e.
  \end{equation}
  Putting the two preceding inequalities together,
  $$\sum_{i=1}^{k-1} \deg(v_i) \le 2 (1-\frac{1}{n}) (k-1) \sum_T y^*_T
  = 2 (1-\frac{1}{n}) \sum_e c_e x^*_e,$$
  where, the last equality is based on strong duality and the
  fact that $z^* = 0$.
\end{proof}

The preceding lemma can be easily generalized to the case when $G$ has
$h$ connected components following the remark in the preceding section
on the \kcut LP. This gives us the following.
\begin{corollary}
  \label{cor:kcut-lp-round}
  Let $G$ be a graph with $h$ connected components and let $x^*$ be an
  \emph{optimum} solution to the \kcut LP such that $x^*(e) \in (0,1)$
  for each $e \in E$. Then
  $\sum_{i=1}^{k-h} \deg(v_i) \le 2(1-1/n) \sum_e c_e x^*_e.$
\end{corollary}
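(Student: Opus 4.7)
The plan is to mirror the proof of Lemma~\ref{lem:kcut-lp-round} almost verbatim, adjusting only the combinatorial constants to reflect that when $G$ has $h$ connected components, (i) each maximal forest $T \in \sptree(G)$ has exactly $n-h$ edges, and (ii) the primal covering constraint now reads $\sum_{e \in T} x_e \ge k-h$, so that the dual objective is $(k-h)\sum_T y_T - \sum_e z_e$ (as noted in the remark preceding the corollary).

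First, I fix an optimum solution $(y^*, z^*)$ to this modified dual LP and apply complementary slackness. Since $x^*(e) \in (0,1)$ for every $e$, the hypothesis $x^*(e) < 1$ forces $z^*(e) = 0$, and the hypothesis $x^*(e) > 0$ forces the dual edge constraint to be tight, i.e. $\sum_{T \ni e} y^*_T = c_e$ for every $e \in E$.

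Next, I double-count edges weighted by $y^*$: summing the tight dual constraint over all edges and using $|E(T)| = n-h$ gives
\begin{equation*}
(n-h)\sum_{T} y^*_T \;=\; \sum_{T} y^*_T\, |E(T)| \;=\; \sum_{e \in E} \sum_{T \ni e} y^*_T \;=\; \sum_{e \in E} c_e.
\end{equation*}
Because the $v_i$'s are sorted by degree,
\begin{equation*}
\sum_{i=1}^{k-h} \deg(v_i) \;\le\; \frac{k-h}{n}\sum_{i=1}^{n}\deg(v_i) \;=\; \frac{2(k-h)}{n}\sum_{e} c_e.
\end{equation*}
Combining the two displays yields $\sum_{i=1}^{k-h}\deg(v_i) \le 2(k-h)\frac{n-h}{n}\sum_T y^*_T$, and finally strong duality (with $z^* = 0$) identifies $(k-h)\sum_T y^*_T = \sum_e c_e x^*_e$, giving
\begin{equation*}
\sum_{i=1}^{k-h}\deg(v_i) \;\le\; 2\Bigl(1-\tfrac{h}{n}\Bigr)\sum_{e}c_e x^*_e \;\le\; 2\Bigl(1-\tfrac{1}{n}\Bigr)\sum_{e}c_e x^*_e,
\end{equation*}
where the last step uses $h \ge 1$.

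There is no real obstacle beyond careful bookkeeping. The only substantive point is confirming that the dual-LP remark from Section~\ref{sec:kcuts} replaces $k-1$ by $k-h$ throughout the dual objective and the edge-count per tree, after which every step of the connected-case proof carries over. In fact the argument actually gives the slightly stronger factor $2(1-h/n)$, which we simply weaken to $2(1-1/n)$ to match the statement.
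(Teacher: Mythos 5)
Your proof is correct and is precisely the generalization the paper has in mind: the remark in Section~\ref{sec:kcuts} replaces $k-1$ by $k-h$ in the tree constraint and each maximal forest has $n-h$ edges, after which the complementary-slackness and double-counting steps of Lemma~\ref{lem:kcut-lp-round} carry over verbatim. Your observation that the argument actually gives the sharper factor $2(1-h/n)$ is also correct, and it is then harmlessly weakened to $2(1-1/n)$ to match the stated corollary.
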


Now we consider the general case when the optimum solution $x^*$ to
the \kcut LP is not necessarily fully fractional as needed in
Lemma~\ref{lem:kcut-lp-round}. The following claim is easy.

\begin{claim}
  Let $x^*(e) = 0$ where $e=uv$. Let $G'$ be the graph obtained from
  $G$ by contracting $u$ and $v$ into a single node. Then there is
  a feasible solution $x'$ to the \kcut LP in $G'$ of the same cost
  as that of $x^*$. Moreover a feasible $k$-cut in $G'$ is a feasible
  $k$-cut in $G$ of the same cost.
\end{claim}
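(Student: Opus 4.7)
The plan is to exhibit the lift $x'$ explicitly and then check the two constraints of the LP and the cost, before turning to the cut statement.

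For the first part, I would define $x'$ to be the restriction of $x^*$ to the edge set $E(G') = E \setminus \{e\}$ (retaining all parallel edges created by the contraction without merging them, so that there is a natural bijection between $E(G')$ and $E\setminus\{e\}$). Since $c_e x^*_e = 0$, the total cost is unchanged. The upper bound and nonnegativity constraints clearly transfer. The only thing to verify is the spanning tree constraint. The key observation is that for every spanning tree $T'$ of $G'$, the edge set $T' \cup \{e\}$ is a spanning tree of $G$: contracting $e$ in $T' \cup \{e\}$ yields $T'$, and since $T'$ spans $G'$ on $n-1$ vertices with $n-2$ edges, $T' \cup \{e\}$ spans $G$ on $n$ vertices with $n-1$ edges. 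Hence
\begin{equation*}
  \sum_{e' \in T'} x'_{e'} \;=\; \sum_{e' \in T' \cup \{e\}} x^*_{e'} - x^*_e \;=\; \sum_{e' \in T' \cup \{e\}} x^*_{e'} \;\ge\; k-1,
\end{equation*}
where the last inequality uses the feasibility of $x^*$ applied to the spanning tree $T' \cup \{e\}$ of $G$. So $x'$ is feasible in the $\kcut$ LP of $G'$.

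For the second part, a feasible $k$-cut of $G'$ is the edge set $E_{G'}(\cP')$ for some partition $\cP'$ of $V(G')$ with $|\cP'| \ge k$. Let $\cP$ be the partition of $V$ obtained from $\cP'$ by replacing the part containing the contracted node $\{u,v\}$ with the original pair $u,v$ placed in the same part. Then $|\cP| = |\cP'| \ge k$, so $\cP$ is a feasible $k$-partition of $G$. Since $u$ and $v$ lie in the same part of $\cP$, the edge $e$ does not cross $\cP$, and there is a natural bijection between $E_{G'}(\cP')$ and $E_G(\cP)$ preserving edge capacities. Thus the two $k$-cuts have the same cost.

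The argument is essentially a formal bookkeeping exercise, so there is no real obstacle; the only point requiring a moment's care is the lift from spanning trees of $G'$ to spanning trees of $G$ containing $e$, which works precisely because $G'$ is obtained by edge contraction (not vertex identification across a non-edge).
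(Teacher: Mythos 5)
The paper states this claim without proof, explicitly flagging it as ``easy,'' so there is no argument in the paper to compare against. Your proof is correct and supplies the missing details: defining $x'$ as the restriction of $x^*$ to $E\setminus\{e\}$ under the natural edge bijection, using the lifting $T' \mapsto T' \cup \{e\}$ from spanning trees of $G'$ to spanning trees of $G$ to verify the tree constraints, and reversing the partition correspondence for the second half. One minor point you could tighten: after observing that $T'\cup\{e\}$ has $n-1$ edges, connectivity (and hence that it is a tree) still needs a word, which follows because any path in $T'$ through the contracted vertex lifts to a walk in $G$ with $e$ bridging $u$ and $v$. Also, if $G$ contains edges parallel to $e$ they become self-loops under contraction; these never enter a spanning tree or a cut, so they are harmless, but if they are dropped in $G'$ your $x'$ has cost at most that of $x^*$ rather than exactly equal --- which is still sufficient for the intended application.
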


Using the preceding claim we can assume without loss of generality
that $x^*(e) > 0$ for each $e \in E$.  Let
$F = \{e \in E \mid x^*(e) = 1\}$. Since the LP solution $x^*$ paid
for the full cost of the edges in $F$, we can recurse on $G' = G-F$
and the fractional solution $x'$ obtained by restricting $x^*$ to
$E \setminus F$.  If $G'$ is connected then $x'$ is an optimum
solution the $\kcut$ LP on $G'$, and is fully fractional, and we can
apply Lemma~\ref{lem:kcut-lp-round}. However, $G'$ can be
disconnected. Let $h$ be the number of connected components in
$G'$. If $h \ge k$ we are done since $E'$ is a feasible $k$-cut and
$c(E') \le \sum_e c_e x^*_e$.  The interesting case is when
$2 \le h < k$. In this case we apply Corollary~\ref{lem:kcut-lp-round}
based on the following claim which is intuitive and whose formal
proof we omit.

\begin{claim}
  Let $x'$ be the restriction of $x^*$ to $E \setminus F$. Then
  for any maximal forest $T$ in $G'$ we have $\sum_{e \in T} x'(e) \ge
  k-h$. Moreover, $x'$ is an optimum solution to the $\kcut$ LP
  in $G'$.
\end{claim}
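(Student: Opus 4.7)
The claim decomposes into two assertions: first, feasibility of $x'$ for the \kcut LP on $G'$, and second, its optimality.

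For the first part, I would take any maximal forest $T$ of $G'$ (which has $n-h$ edges) and extend it to a spanning tree $T^* = T \cup F'$ of $G$ by adjoining a set $F' \subseteq F$ of $h-1$ edges. Such an $F'$ exists because contracting the components of $G'$ to points turns $G$ into a connected multigraph on $h$ vertices whose edges are drawn from $F$, and this quotient has a spanning tree of $h-1$ edges. Applying the $G$-LP constraint to $T^*$ and using $x^*_e = 1$ on $F' \subseteq F$ immediately gives $\sum_{e \in T} x'_e \ge (k-1) - (h-1) = k - h$.

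For the second part, my plan is a direct exchange argument. Given any feasible $x''$ for the \kcut LP on $G'$, extend it to $\tilde x$ on $E$ by setting $\tilde x_e = 1$ for $e \in F$ and $\tilde x_e = x''_e$ otherwise. The key step is to verify $\tilde x$ is feasible for the $G$-LP. For any spanning tree $T^*$ of $G$, write $F' = T^* \cap F$ and $T_r = T^* \setminus F'$; by the same contraction argument as above, $|F'| \ge h-1$, so $T_r$ is a forest of $G'$ with $n-1-|F'| \le n-h$ edges. Extend $T_r$ to a maximal forest $T'$ of $G'$ by adjoining $|F'|-(h-1)$ edges; feasibility of $x''$ against $T'$ combined with the constraint $x''_e \le 1$ on the adjoined edges yields $\sum_{e \in T_r} x''_e \ge k-1-|F'|$, and adding the $|F'|$ contributed by $F'$ gives $\sum_{e \in T^*} \tilde x_e \ge k-1$.

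Once feasibility is established, optimality of $x'$ follows immediately. Observe that $\sum_e c_e \tilde x_e - \sum_{e \notin F} c_e x''_e = c(F) = \sum_e c_e x^*_e - \sum_{e \notin F} c_e x'_e$, so optimality of $x^*$ on $G$ gives $\sum_{e \notin F} c_e x'_e \le \sum_{e \notin F} c_e x''_e$, and since $x''$ was arbitrary, $x'$ is optimal. The main obstacle is the feasibility verification for $\tilde x$; it relies crucially on the upper bound $x_e \le 1$ in the \kcut LP, which controls the contribution of the forest-extension edges. This is precisely why the LP includes those bounds for $k>2$, and it is what would fail if one tried to work only with the tree-packing constraints.
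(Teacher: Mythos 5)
Your proof is correct and complete. The paper explicitly omits a formal proof of this claim (calling it ``intuitive''), so there is no argument in the text to compare against; what you have supplied fills that gap cleanly. Both directions rest on the natural contraction observation: contracting the components of $G'$ turns $G$ into a connected graph on $h$ vertices whose edges all lie in $F$, so every spanning tree $T^*$ of $G$ has $|T^* \cap F| \ge h-1$, and conversely any maximal forest of $G'$ can be completed to a spanning tree of $G$ using exactly $h-1$ edges of $F$. Your feasibility check of $\tilde x$ is the slightly nonobvious step, and you correctly identify where the bound $x_e \le 1$ (both on $x''$ over the forest-completing edges and on $\tilde x$ over $F$) is essential --- this is exactly the point the paper flags when it notes the upper-bound constraints are needed for $k>2$. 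The cost accounting $\sum_e c_e\tilde x_e = c(F) + \sum_{e\notin F} c_e x''_e$ together with optimality of $x^*$ on $G$ then gives optimality of $x'$ on $G'$, as required.
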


From Corollary~\ref{lem:kcut-lp-round} we can find $E' \subset
E\setminus F$ such that $G'-E'$ induces a k-cut in $G'$ such that
$$c(E') \le 2(1-\frac{1}{n})\sum_{e \in E\setminus F} c_e x'_e
= 2(1-\frac{1}{n})\sum_{e \in E\setminus F} c_e x^*_e.$$
Therefore $F \cup E'$ induces a $k$-cut in $G$ and we have
that
$$c(F \cup E') = c(F) + c(E') \le \sum_{e \in F} c_e x^*_e +
2(1-\frac{1}{n})\sum_{e \in E\setminus F} c_e x^*_e \le
2(1-\frac{1}{n})\sum_{e \in E} c_e x^*_e.$$

This finishes the proof. Note that the proof also gives a very simple
rounding algorithm assuming we have an optimum solution $x^*$ for the
LP. Contract all edges with $x^*(e) = 0$, remove all edges $e$ with
$x^*(e) = 1$, and use Corollary~\ref{lem:kcut-lp-round} in the
residual graph to find the $(k-h)$ smallest degrees vertices.

\section{Characterizing the optimum LP solution}
\label{sec:lp-val-for-all-k}
We have seen that the dual of the LP relaxation for \kcut yields a
tree packing that can be used in place of Thorup's recursive tree
packing. In this section we show that the two are the same by
characterizing the optimum LP solution for a given graph through a
recursive partitioning procedure.  This yields a nested sequence of
partitions of the vertex set of the graph. This sequence is called the
principal sequence of partitions of a graph and is better understood
in the more general context of submodular functions \cite{n-91}.  We
refer the reader to Fujishige's article more on this topic
\cite{f-09}, and to \cite{c-85,k-10} for algorithmic aspects in the
setting of graphs. We also connect the LP relaxation with the
Lagrangean relaxation approach for \kcut considered by Barahona
\cite{b-00} and Ravi and Sinha ~\cite{rs-08}. Their approach is
also built upon the principal sequence of partitions. In order to keep the
discussion simple we mainly follow the notation and approach of
\cite{rs-08}.

Given $G=(V,E)$ and an edge set $A \subseteq E$ let $\numcomp(A)$
denote the number of connected components in $G-A$. Recall that the
strength of a capacitated graph $G$, denoted by $\strength(G)$ is
defined as $\min_{A \subseteq E} \frac{c(A)}{\numcomp(A) - 1}$. The
\kcut problem can be phrased as $\min_{A: \numcomp(A) \ge k} c(A)$.
However, the constraint that $\numcomp(A) \ge k$ is not
straightforward. It is, however, not hard to show that $\numcomp(A)$
is a supermodular set function over the ground set $E$. A Lagrangean
relaxation approach was considered in \cite{b-00,rs-08}. To set this
up we define, for any fixed edge set $A$, a function
$g_A: \mathbb{R}_+ \rightarrow \mathbb{R}$ as
$g_A(b) = c(A) - b(\numcomp(A) -1)$. We then obtain the function
function $g:\mathbb{R}_+ \rightarrow \mathbb{R}$ where
$g(b) = \min_{A \subseteq E} c(A) - b(\numcomp(A) - 1).$ The quantity
$g(b)$ is the attack value of the graph for parameter $b$ and was
considered by Cunningham \cite{c-85} in his algorithm to compute the
the strength of the graph.

Then, as noted in \cite{b-00,rs-08},
\begin{align*}
  \min_{A: \numcomp(A) \ge k} c(A) & \ge \max_{b \ge 0} \min_{A \subseteq E} c(A) + b(k - \numcomp(A)) = \max_{b \ge 0} g(b) + b(k-1).
\end{align*}
Thus $g'(b) = g(b) + b(k-1)$ provides a lower bound on the optimum
solution value. \cite{rs-08} describes structural properties of the
function $g$, several of which are explicit or implicit in
\cite{c-85}. We state them below.
\begin{itemize}
\item The functions $g$ and $g'$ are continuous, concave and piecewise
  linear and have no more than $n-1$ breakpoints. The function $g$ is
  non-increasing in $b$.
\item Under a non-degeneracy assumption on the graph, which is easy to
  ensure, the following holds. If $b$ is not a breakpoint then
  there is a unique edge set $A$ such that $g_A(b) = g(b)$. If $b$ is
  a breakpoint then there are exactly two edge sets $A, B$ such that
  $g_A(b) = g_B(b)$.
\item If $b_0$ is a breakpoint of $g'$ induced by edge sets $A$ and
  $B$ $\numcomp(A) > \numcomp(B)$ then $B \subset A$. In particular
  $A \setminus B$ is contained in some connected component of $G'=(V,E
  \setminus B)$.
\item Let $b_0$ be a breakpoint of $g'$ induced by edge set $A$. Then
  the next breakpoint is induced by the edge set which is the solution
  to the strength problem on the smallest strength component of
  $G'=(V,E \setminus A)$.
\end{itemize}

The above properties show that the breakpoints induce a sequence of
partitions of $V$ which are refinements.  Alternatively we consider
the sequence of edge sets $A_1,A_2, \ldots,$ obtained by the following
algorithm. We will assume that $G$ is connected.  Let
$A_0=\emptyset$. Given $A_i$ we obtain $A_{i+1} \supseteq A_i$ as
follows.  Let $G_i = (V,E\setminus A_{i-1})$. If $G_i$ has no edges we
stop. Otherwise let $C_{i+1}$ be the minimum strength connected
component of $G_i$ and $B_{i+1}$ be a minimum strength edge set of
$C_{i+1}$. We define $A_{i+1} = A_i \cup B_{i+1}$. The process stops
when $A_h = E$. Let $\cP_i$ denote the partition of $V$ induced by
$A_i$. Note that $\cP_{i+1}$ is obtained from $\cP_i$ by replacing
$C_{i+1}$ by a minimum strength partition of $C_{i+1}$, thus
$\cP_{i+1}$ is a refinement of $\cP_i$ and $\cP_h$ consists of
singleton nodes. Note that Thorup's ideal tree packing is also
based on the same recursive decomposition.

Ravi and Sinha obtained a $2$-approximation for \kcut as follows.
Given the preceding decomposition of $G$ they consider the smallest
$j$ such that $|\cP_j| \ge k$. If $|\cP_j| = k$ they output it and can
argue that it is an optimum solution. Otherwise they do the
following. Recall $\cP_j$ is obtained from $\cP_{j-1}$ by replacing
the component $C_j$ in $G - A_{j-1}$ by a minimum strength
decomposition of $C_j$. Let $k' = k - |\cP_{j-1}|$. Consider the
minimum strength partition of $C_j$ and let $H_1,H_2,\ldots,H_{k'}$ be
the connected components of the partition with the smallest shores.
Output the cut $A_{j-1} \cup (\cup_{\ell=1}^{k'} \delta(H_\ell))$.

\paragraph{An optimum LP solution from the decomposition:} Given $k$,
as before let $j$ be the smallest index such that
$\numcomp(A_j) \ge k$. Let $k' = k - |\cP_{j-1}|$. We consider the
following solution to the LP:
\begin{itemize}
\item $x(e) = 1$ for each $e \in A_{j-1}$.
\item $x(e) = \alpha$ for each $e \in A_j \setminus A_{j-1}$, where
  \begin{math}
    \alpha = \frac{k - \numcomp(A_{j-1})}{\numcomp(A_j) -
      \numcomp(A_{j-1})}.
  \end{math}
\item $x(e) = 0$ for each $e \in E \setminus A_j$.
\end{itemize}

\begin{lemma}
  \label{lem:feasibleLPsolution}
  The solution $x$ is feasible and has objective value
  \begin{align*}
    \bar{c}(A_{j-1}) + \alpha \bar{c}(B_{j}) %
    =                           %
    \bar{c}(A_{j-1}) +
    \left( k - \numcomp_{j-1} \right) \lambda_j,
  \end{align*}
  where we denote $\bar{c}(A) = \sum_{e \in A} c(e)$.
\end{lemma}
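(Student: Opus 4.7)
\medskip

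The plan is to separate the claim into (i) feasibility of $x$ for the LP of Fig.~\ref{fig:lp-kcut} and (ii) the two equalities for the objective value, and to show that each part follows mostly by unwinding definitions once the right quantities are identified.

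For feasibility, I would first observe that $\alpha \in (0,1]$ because $\numcomp_{j-1} < k \le \numcomp_j$ by the choice of $j$, so $x(e) \in [0,1]$ holds by construction. The main constraint to check is $\sum_{e \in T} x(e) \ge k-1$ for every spanning tree $T$. The key observation is that for any edge set $F \subseteq E$, every spanning tree $T$ must contain at least $\numcomp(F)-1$ edges of $F$, since contracting $E \setminus F$ keeps $T$ a connected spanning subgraph on the $\numcomp(F)$ components of $G - F$. Applying this to $F = A_{j-1}$ and to $F = A_j$, and setting $a = |T \cap A_{j-1}|$, $b = |T \cap B_j|$, the LP-value on $T$ is $a + \alpha b$ subject to $a \ge \numcomp_{j-1} - 1$ and $a + b \ge \numcomp_j - 1$. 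Since $\alpha \le 1$, the minimum of $a + \alpha b$ over this region is attained at $a = \numcomp_{j-1} - 1$, $b = \numcomp_j - \numcomp_{j-1}$, and a one-line computation using the definition of $\alpha$ shows this minimum equals exactly $k-1$.

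For the objective value, the first equality $\sum_e c_e x_e = \bar{c}(A_{j-1}) + \alpha \bar{c}(B_j)$ is immediate from the definition of $x$. For the second equality, I would invoke that $B_j$ is a minimum-strength edge set of the minimum-strength component $C_j$, so $\lambda_j = \strength(C_j) = \bar{c}(B_j)/(p-1)$, where $p$ is the number of parts into which $B_j$ partitions $C_j$. Since adding $B_j$ to $A_{j-1}$ replaces the single component $C_j$ of $G - A_{j-1}$ by $p$ components while leaving all other components untouched, we have $p - 1 = \numcomp_j - \numcomp_{j-1}$. Therefore $\bar{c}(B_j) = \lambda_j(\numcomp_j - \numcomp_{j-1})$, and substituting into $\alpha \bar{c}(B_j)$ cancels the denominator of $\alpha$ to give $(k - \numcomp_{j-1})\lambda_j$.

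There is no substantial obstacle; the argument is essentially an exercise in matching the recursive-decomposition quantities with the LP constraints. The only step that requires a little care is the feasibility verification, and even there the constraint set is a two-variable polyhedron in $(a,b)$ where $\alpha \le 1$ pins the minimizing vertex, which is why the specific value of $\alpha$ is exactly what makes the tree constraint tight (and, incidentally, is what will also make the solution optimal, though the lemma only claims feasibility).
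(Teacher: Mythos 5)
Your proof is correct and follows essentially the same route as the paper's: bound $\sum_{e\in T} x(e)$ below by noting that $|T\cap A_{j-1}|$ and $|T\cap A_j|$ are each at least the corresponding $\numcomp_i - 1$, then compute the objective directly. You are in fact slightly more complete than the paper, which does not explicitly check $\alpha\le 1$ nor spell out the step $\alpha\bar{c}(B_j)=(k-\numcomp_{j-1})\lambda_j$ from $\bar{c}(B_j)=\lambda_j(\numcomp_j-\numcomp_{j-1})$ that you derive from the definition of strength.
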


\begin{proof}
  Let $T$ be any spanning tree. We want to show that
  $\sum_{e \in T} x(e) \geq k-1$.  For each $j$, let
  $\numcomp_j = \numcomp(A_{j})$, and let $\ell_j = \abs{T \cap
    A_j}$. Then $T$ has $\ell_{j-1}$ edges of value $x(e) = 1$, and
  $\ell_j - \ell_{j-1}$ edges of value $\alpha$.  We have
  \begin{align*}
    \sum_{e \in T} x(e)         %
    &=                           %
      \ell_{j-1} +
      (\ell_j - \ell_{j-1}) \alpha
    \\
    &\geq                     %
      \numcomp_{j-1} - 1
      +                         %
      (\numcomp_j - \numcomp_{j-1})
      \alpha
      = k - 1,
  \end{align*}
  where we observe that the RHS of the first line is decreasing in
  both $\ell_j$ and $\ell_{j-1}$,
  \begin{math}
    \ell_j \geq \numcomp_{j} - 1,
  \end{math}
  and
  \begin{math}
    \ell_{j-1} \geq \numcomp_{j-1} - 1.
  \end{math}
  To calculate the objective value, we have
  \begin{align*}
    \sum_{e \in E} x(e)         %
    =                           %
    \sum_{e \in A_{j-1}} c(e)
    +                           %
    \sum_{e \in A_j \setminus A_{j-1}} \alpha c(e) %
    =                                             %
    \bar{c}(A_{j-1}) + \alpha\bar{c}(B_j)
  \end{align*}
\end{proof}

The harder part is:

\begin{lemma}
  The solution $x$ attains the optimum value to the LP relaxation.
\end{lemma}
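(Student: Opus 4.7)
The plan is to prove LP-optimality of $x$ by exhibiting a feasible dual solution $(y, z)$ whose objective value equals the primal value computed in Lemma~\ref{lem:feasibleLPsolution}; by weak LP duality this forces both to equal the LP optimum. Complementary slackness with $x$ (which is $1$ on $A_{j-1}$, $\alpha \in (0,1)$ on $B_j$, and $0$ elsewhere) dictates that $z$ should be supported on $A_{j-1}$ and that the tree packing $y$ must saturate every $e \in B_j$ against its original capacity.

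Guided by this, I would set $b = \strength_j$ as the Lagrangean level and define $z$ on $A_{j-1}$ by $z_e = c_e(\strength_j - \strength_i)/\strength_i$ for each $e \in B_i$, $i < j$. Using $\bar c(B_i) = \strength_i(\numcomp_i - \numcomp_{i-1})$, a telescoping sum gives
\[
\sum_{e \in A_{j-1}} z_e \;=\; \sum_{i<j}(\strength_j - \strength_i)(\numcomp_i - \numcomp_{i-1}) \;=\; \strength_j(\numcomp_{j-1} - 1) - \bar c(A_{j-1}).
\]
Combined with a tree packing $y$ of total weight exactly $\strength_j$, the dual objective evaluates to $(k-1)\strength_j - \sum_e z_e = \bar c(A_{j-1}) + (k - \numcomp_{j-1})\strength_j$, matching the primal. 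The entire proof thus reduces to producing such a packing.

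The main obstacle is the existence of this tree packing. By the Tutte--Nash-Williams theorem (Theorem~\ref{thm:tutte-nw}), it suffices to verify that under capacities $c+z$ the graph $G$ has strength at least $\strength_j$, i.e.\ $(c+z)(E(\cP)) \ge \strength_j(|\cP|-1)$ for every partition $\cP$ of $V$. The partition $\cP_{j-1}$ achieves equality by the construction of $z$, pinning the strength at exactly $\strength_j$. I would verify the lower bound by passing through the common refinement $\cP^*$ of $\cP$ and $\cP_{j-1}$, for which $E(\cP^*) = E(\cP) \cup A_{j-1}$. Using the standard fact that every component $V_t \in \cP_{j-1}$ has strength at least $\strength_j$, the portion of $E(\cP^*)$ inside $V_t$ contributes at least $\strength_j(p_t - 1)$ of capacity, where $p_t$ is the number of parts of $\cP^*$ inside $V_t$; adding $(c+z)(A_{j-1}) = \strength_j(\numcomp_{j-1}-1)$ yields $(c+z)(E(\cP^*)) \ge \strength_j(|\cP^*|-1)$. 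Transferring the inequality back to $\cP$ requires the complementary bound $(c+z)(A_{j-1} \setminus E(\cP)) \le \strength_j(|\cP^*| - |\cP|)$, which is itself a strength inequality on the contracted graph $G/\cP_{j-1}$ with the same boost. Since $G/\cP_{j-1}$'s principal sequence inherits only the stages $\strength_1,\ldots,\strength_{j-1}$ from $G$ and hence has strictly smaller depth, an induction on the depth of the principal sequence closes the argument; the base case $j=1$ is vacuous, since $z=0$ and the claim reduces to $\strength(G) \ge \strength_1$, which is the definition of $\strength_1$.

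Once the tree packing is in hand, the dual objective matches the primal value, proving that $x$ attains the LP optimum.
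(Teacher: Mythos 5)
Your overall strategy matches the paper's: exhibit a dual pair $(y,z)$ whose objective equals the primal value from Lemma~\ref{lem:feasibleLPsolution}, then invoke weak duality. Your $z$ is \emph{identical} to the paper's (they write it as $c(e)+z(e)=\tfrac{\lambda_j}{\lambda_i}c(e)$ on $B_i$ for $i<j$), your telescoping computation of $\sum_e z_e = \lambda_j(\numcomp_{j-1}-1)-\bar{c}(A_{j-1})$ is the same, and the final objective calculation matches. The one place where you genuinely diverge is in producing the tree packing $y$ of value $\lambda_j$ under capacities $c+z$. The paper \emph{constructs} $y$ explicitly by scaling Thorup's ideal recursive tree packing ($y_T=\lambda_j\,p_T$) and then checks feasibility edge by edge; this is deliberate, since the whole point of Section~\ref{sec:lp-val-for-all-k} is to exhibit the dual LP optimum \emph{as} Thorup's ideal packing. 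You instead invoke Tutte--Nash-Williams and try to verify abstractly that $\strength(G)\ge\lambda_j$ under $c+z$. That is a legitimately different route; it is more self-contained (no appeal to the ideal-packing construction) but it loses exactly the structural connection the paper is after.

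On the substance of your strength verification: the reduction to $\cP^*=\cP\wedge\cP_{j-1}$ and the per-component lower bound $c(E(\cP^*)\setminus A_{j-1})\ge\lambda_j(|\cP^*|-\numcomp_{j-1})$ are correct. But the remaining inequality $(c+z)(A_{j-1}\setminus E(\cP))\le\lambda_j(|\cP^*|-|\cP|)$, which you wave off as ``a strength inequality on $G/\cP_{j-1}$,'' is actually an \emph{upper} bound on induced edge capacity, i.e.\ a fractional-arboricity bound, not a strength bound. It is equivalent to a strength bound on $H=G/\cP_{j-1}$ only because your telescoping computation shows $(c+z)(A_{j-1})=\lambda_j(\numcomp_{j-1}-1)$ exactly, so $H$'s overall density is exactly $\lambda_j$; that equivalence deserves to be spelled out. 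Moreover your induction, as stated, does not quite close: the inductive hypothesis applied to $H$ (whose principal sequence has depth $j-1$) gives $\strength(H)\ge\lambda_{j-1}$ under the level-$(j-1)$ boost $z^{(j-1)}$, not $\ge\lambda_j$ under $z^{(j)}$. You need the extra (easy, but missing) observation that on $H$'s edge set $A_{j-1}$ one has $c+z^{(j)}=\tfrac{\lambda_j}{\lambda_{j-1}}(c+z^{(j-1)})$, so homogeneity of strength lifts $\lambda_{j-1}$ to $\lambda_j$. You also assert without proof that $G/\cP_{j-1}$ inherits the principal sequence $\lambda_1<\cdots<\lambda_{j-1}$ with the same edge sets $B_i$; this is true but not in the paper's stated list of properties, so it needs its own argument. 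None of these gaps are fatal --- with the density/arboricity equivalence, the rescaling step, and the inheritance lemma filled in, your proof goes through --- but they are real omissions, and together they make your route at least as long as the paper's explicit one while forgoing the payoff (the identification with the ideal tree packing) that motivates the section.
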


\begin{proof}
  We prove the claim by constructing a dual solution equal value. See
  Figure \ref{fig:lp-dual-kcut} for the dual LP.

  Recall the definitions of $\cP_i$, $A_i$, $B_i$, and $C_i$ from above.
  For each $i$, let $\numcomp_i = \numcomp(A_i) = |\cP_i|$ be
  the number of components in the $i$th partition.  Let
  $\lambda_1 < \lambda_2 < \cdots < \lambda_j$ be the strengths of the
  components $C_1,C_2,\dots,C_j$. Let $Q_i$ be the partition on $C_i$
  corresponding to $B_i$.  An ideal tree packing, following
  \cite{Thorup08}, is a convex combination of trees
  $p: \sptree(G) \to [0,1]$ s.t.\ $\sum_T p_T = 1$ with the following
  properties.
  \begin{enumerate}
  \item For each $i$, every tree $T$ supported by $p$ induces a tree
    in the graph $G / \cP_i$ obtained by contracting each component of
    $\cP_i$.
  \item For each $i$ and each edge $e \in B_i$, $p$ induces a load of
    $1/\lambda_i$ on $e$.
  \end{enumerate}
  Every graph has an ideal tree packing, and (for example) can be
  constructed recursively as follows. For each $C_i$, we write each
  $B_i$ as a sum of $\lambda_i$ (units of fractional) trees in
  $C_i / Q_i$ (which holds because $B_i$ is a minimum strength cuts),
  and scale it down to a distribution of trees in $C_i / Q_i$ with
  load $1 / \lambda_i$ on each edge in $B_i$. An ideal tree packing
  now corresponds to the distribution where we take the union of one
  sampled spanning tree from (the distribution of) each $C_i / Q_i$.

  Let $p: \sptree(G) \to [0,1]$ be an ideal tree packing. To construct
  our dual solution, we define nonnegative edge potentials
  $z(e) \geq 0$ and a tree packing $y(t) \geq 0$ (packing into
  $c + z$) s.t.\
  \begin{align*}
    y_T
    &=                     %
      \lambda_j p(T)       %
      \hspace{17.5pt}\text{for all } T \in \sptree(G), \\
    c(e) + z(e)                 %
    &=                     %
      \begin{cases}
        \frac{\lambda_j}{\lambda_i} c(e) &\text{for all } e
        \in
        B_i \text{ for } i < j \\
        c(e) &\text{otherwise.}
      \end{cases}
  \end{align*}
  We first claim that $(y,z)$ is feasible in the dual LP; that is, $y$
  is a feasible tree packing w/r/t the augmented capacities $c + z$.
  Observe that for any edge $e$, $y$ uses capacity $\lambda_j$ times
  the capacity by $p$. We need to show the capacity used by $y$ along
  any edge $e$ is at most $c(e) + z(e)$. We have two cases.
  \begin{enumerate}
  \item If $e \in B_i$ for some $i < j$, then $p$ uses capacity
    $\frac{c(e)}{\lambda_i}$. In turn, $y$ uses capacity
    \begin{math}
      \frac{\lambda_j}{\lambda_i} c(e).
    \end{math}
    By choice of $z(e)$, we have
    \begin{math}
      c(e) + z(e) = \frac{\lambda_j}{\lambda_i} c(e),
    \end{math}
    as desired.
  \item If $e \in E \setminus A_{j-1}$, then $p$ uses capacity at most
    $\frac{c(e)}{\lambda_j}$. In turn, $y$ uses capacity at most
    $\frac{\lambda_j}{\lambda_j} c(e)$. But $\lambda_j \leq \lambda_j$,
    so the capacity used by $y$ is $\leq c(e)$.
  \end{enumerate}
  We now analyze the objective value of our dual solution. We
  first observe that since each tree supported by $y$ is a tree in
  $G / \cP_j$, we have
  \begin{align*}
    (k-1) \sum_T y_T            %
    &=                           %
      \frac{k-1}{\numcomp_j-1} \sum_T y_T {| T \cap A_j |}
      =                           %
      \frac{k-1}{\numcomp_j-1} \sum_{e \in A_j} \sum_{T \ni e} y_T
    \\
    &=                           %
      \frac{k-1}{\numcomp_j-1} \lambda_j \sum_{i \leq j}
      \frac{1}{\lambda_i} \sum_{e \in B_i}  c(e)
      =                           %
      \frac{k-1}{\numcomp_j-1} \lambda_j \sum_{i \leq j}
      {\numcomp_i - \numcomp_{i-1}}
    \\
    &=                           %
      \frac{k-1}{\numcomp_j-1} \lambda_j ({\numcomp_j - 1})
      =                     %
      (k-1) \lambda_j
      .
  \end{align*}
  On the other hand, when subtracting out the augmented capacities, we
  have
  \begin{align*}
    \sum_{e \in E} z(e)         %
    &= %
      \sum_{i < j} \sum_{e \in B_i} \left({\frac{\lambda_j}{\lambda_i} - 1}\right)
      c(e) %
      =                           %
      \lambda_j
      \left(                       %
      \sum_{i < j} \frac{1}{\lambda_i} \sum_{e \in B_i} c(e)
      \right)                       %
      -                           %
      \sum_{e \in A_{j-1}} c(e)
    \\
    &=                           %
      \lambda_j \sum_{i < j} \left( \numcomp_i - \numcomp_{i-1}  \right)
      -                           %
      \sum_{e \in A_{j-1}} c(e)       %
      =                           %
      \lambda_j \left( \numcomp_{j-1} - 1 \right) - \sum_{e \in A_{j-1}} c(e)
  \end{align*}
  Thus the total objective value of our solution, as a function of
  $\lambda_j$, is
  \begin{align*}
    (k-1) \sum_T y_T - \sum_{e \in E} z(e) %
    =                                      %
    \left( k - \numcomp_{j-1}
    \right)                     %
    \lambda_j
    +                           %
    \sum_{e \in A_{j-1}} c(e),
  \end{align*}
  as desired.
\end{proof}

\begin{remark}
  One can also verify the optimality of $(x,y,z)$ in the proof above
  by complimentary slackness conditions. Recall that $x$ and $(y,z)$
  satisfy the complimentary slackness conditions if
  \begin{enumerate}
  \item $z_e > 0$ only if $x_e = 1$.
  \item $y_T > 0$ only if $\sum_{e \in T} x_e = k-1$.
  \item $x_e > 0$ only if $\sum_{T \ni e} y_e = c_e + z_e$.
  \end{enumerate}
  We address these individually.
  \begin{enumerate}
  \item $z_e > 0$ only if $e \in B_i$ for some $i < j$. In this case,
    $e \in A_{j-1}$ so $x_e = 1$.
  \item If $y_T > 0$ then $T$ is in the support of the ideal tree
    packing. In particular, $T$ contains exactly $\numcomp_{j} - 1$
    edges from $A_{j}$ and $\numcomp_{j} - 1$ edges from $A_{j-1}$, so
    we have
    \begin{align*}
      \sum_{e \in T} x_e        %
      =                         %
      \sum_{e \in T \cap A_{j-1}} 1 +
      \sum_{e \in
      T \cap (A_j \setminus A_{j-1})} %
      \frac{k - \numcomp_{j-1}}{\numcomp_j - \numcomp_{j-1}} %
      =                         %
      \numcomp_{j-1} - 1 + k - \numcomp_{j-1} %
      =                         %
      k - 1,
    \end{align*}
    as desired.
  \item If $x_e > 0$, then $e \in B_i$ for some $i \leq j$, so $y$
    uses $\frac{\lambda_j}{\lambda_i} c(e) = c(e) + z(e)$ units of
    capacity of $e$, as desired.
  \end{enumerate}
\end{remark}

\paragraph{Implications of the characterization:} We now outline some
implications of the preceding characterization of the optimum LP
solution.

Ravi and Sinha showed that Lagrangian relaxation lower bound is no
weaker than the one provided by LP relaxation. Here we show that they
are equivalent.
\begin{theorem}
The Lagrangian relaxation value is the same as the LP
value.
\end{theorem}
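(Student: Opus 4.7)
My plan is to prove the equality by showing that the Lagrangian value $\max_{b \geq 0} \left( g(b) + b(k-1) \right)$ is attained at the specific parameter $b = \lambda_j$, where $j$ is the smallest index such that $\numcomp_j \geq k$, and that the value at this point coincides exactly with the LP value already identified in Lemma~\ref{lem:feasibleLPsolution}. This turns the equality into two short computations.

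First I would evaluate $g(\lambda_j)$ using the structure of the principal sequence. From the properties of $g$ listed earlier, on the interval $(\lambda_{j-1}, \lambda_j)$ the unique minimizer of $c(A) - b(\numcomp(A) - 1)$ is $A_{j-1}$, so on this interval $g(b) = c(A_{j-1}) - b(\numcomp_{j-1} - 1)$; by continuity of $g$, this formula extends to $b = \lambda_j$. One also checks that $A_j$ achieves the same value at $b = \lambda_j$ since $c(B_j) = \lambda_j(\numcomp_j - \numcomp_{j-1})$, consistent with $\lambda_j$ being a breakpoint. Plugging in,
\begin{align*}
g(\lambda_j) + \lambda_j(k-1) = \bar c(A_{j-1}) + (k - \numcomp_{j-1})\lambda_j,
\end{align*}
which is exactly the LP value from Lemma~\ref{lem:feasibleLPsolution}. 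This already shows $\max_b g(b) + b(k-1) \geq \text{LP value}$, which is the direction of Ravi and Sinha.

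Next I would show the reverse inequality by verifying that $b = \lambda_j$ maximizes the concave piecewise-linear function $h(b) := g(b) + b(k-1)$. The slope of $g$ on $(\lambda_{i-1}, \lambda_i)$ is $-(\numcomp_{i-1}-1)$, so the slope of $h$ drops from $k - \numcomp_{j-1}$ on the left of $\lambda_j$ to $k - \numcomp_j$ on the right. By the minimality of $j$, we have $\numcomp_{j-1} < k \leq \numcomp_j$, so the left slope is strictly positive and the right slope is nonpositive. Concavity then implies $\lambda_j$ is a maximizer, hence $\max_b h(b) = h(\lambda_j) = \text{LP value}$.

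The main obstacle, such as it is, is purely bookkeeping around the piecewise-linear structure of $g$: specifically identifying the minimizers $A_{i-1}$ on each slope segment and the corresponding slopes $-(\numcomp_{i-1} - 1)$. These facts are standard consequences of the principal sequence of partitions (see \cite{c-85,rs-08}) and require no new argument; one simply reads off the minimizer on each interval between consecutive breakpoints. With that in hand, the theorem follows immediately from the two slope computations above and Lemma~\ref{lem:feasibleLPsolution}.
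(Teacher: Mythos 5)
Your proof is correct and follows essentially the same route as the paper's: evaluate $g'(b) = g(b) + b(k-1)$ at $b = \lambda_j$ and observe that the resulting expression $\bar c(A_{j-1}) + (k-\numcomp_{j-1})\lambda_j$ matches the LP value from Lemma~\ref{lem:feasibleLPsolution}. The one place you go beyond the paper is in justifying that $\lambda_j$ actually maximizes $g'$; the paper simply asserts this (``one can see the function $g'$ maximizes at $\lambda_j$''), while you supply the slope argument (left slope $k-\numcomp_{j-1}>0$, right slope $k-\numcomp_j\le 0$, concavity) which is a worthwhile clarification but not a different approach.
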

\begin{proof}
Let $\numcomp_i=\numcomp(A_i)=|\cP_i|$ be the number of components after
removing $A_i$, and $\lambda_i$ to be the critical value of $|\cP_i|$.
If $\numcomp_j=k$ for some $j$, then the Lagrangian relaxation value is the
min $k$-cut value. Hence it matches the LP value.
Otherwise, assume $\numcomp_{j-1}<k<\numcomp_j$.
In this case, one can see the function $g'$ maximizes at $\lambda_j$.
Since $g'$ is piecewise linear, we have
\[
g'(\lambda_j) = \bar{c}(A_{j-1})-\lambda_j(\numcomp_{j-1}-1)+\lambda_j(k-1) = \bar{c}(A_{j-1})+(k-\numcomp_{j-1})\lambda_j
.\]
This is precisely the value of the LP in \autoref{lem:feasibleLPsolution}.
\end{proof}
The preceding also gives yet another proof that the integrality gap
of the LP is $2(1-1/n)$.

\medskip

Second, as we saw, for any value of $k$, an optimum dual solution to
the \kcut LP can be derived from the ideal tree packing
\cite{Thorup07,Thorup08}. The last issue is the connection between
greedy tree packing and the dual LP. At the high-level it is tempting
to conjecture that greedy tree packing is essentially approximating
the dual LP via the standard MWU approach. Proving the conjecture
formally may require a fair amount of technical work and we leave it
for future work. We believe that some insights obtained in
\cite{Quanrud18} could be useful in this context; \cite{Quanrud18}
recasts the LP relaxation for \kcut into a pure covering LP, and the
dual as a pure packing LP that involves packing forests.

\bibliographystyle{plain}
\bibliography{kcut}

\end{document}